\renewcommand\footnotetextcopyrightpermission[1]{}
\theoremstyle{plain}
\newtheorem{theorem}{Theorem}[section]
\newtheorem{lemma}[theorem]{Lemma}
\newtheorem{proposition}[theorem]{Proposition}
\newtheorem{corollary}[theorem]{Corollary}
\theoremstyle{definition}
\newtheorem{definition}{Definition}[section]
\theoremstyle{remark}
\newcommand{\R}{\mathbb{R}}
\title{Fairly Allocating (Contiguous) Dynamic Indivisible Items with Few Adjustments}
\author{Mingwei Yang}
\affiliation{
  \institution{Peking University}
  \city{Beijing}
  \country{China}}
\email{yangmingwei@pku.edu.cn}
\begin{abstract}

We study the problem of dynamically allocating $T$ indivisible items to $n$ agents with the restriction that the allocation is fair all the time.
Due to the negative results to achieve fairness when allocations are irrevocable, we allow adjustments to make fairness attainable with the objective to minimize the number of adjustments.
For restricted additive or general identical valuations, we show that \textit{envy-freeness up to one item (EF1)} can be achieved with no adjustments.
For additive valuations, we give an EF1 algorithm that requires $O(mT)$ adjustments, improving the previous result of $O(nmT)$ adjustments, where $m$ is the maximum number of different valuations for items among all agents.

We further impose the contiguity constraint on items such that items are arranged on a line by the order they arrive and require that each agent obtains a consecutive block of items.
We present extensive results to achieve either \textit{proportionality} with an additive approximate factor (PROPa) or EF1, where PROPa is a weaker fairness notion than EF1.
In particular, we show that for identical valuations, achieving PROPa requires $\Theta(nT)$ adjustments.
Moreover, we show that it is hopeless to make any significant improvement for either PROPa or EF1 when valuations are nonidentical.

Our results exhibit a large discrepancy between the identical and nonidentical cases in both contiguous and noncontiguous settings.
All our positive results are computationally efficient.

\end{abstract}
\keywords{}
\newcommand{\BibTeX}{\rm B\kern-.05em{\sc i\kern-.025em b}\kern-.08em\TeX}
\begin{document}

%%% The following commands remove the headers in your paper. For final 
%%% papers, these will be inserted during the pagination process.

%\pagestyle{fancy}
%\fancyhead{}

%%% The next command prints the information defined in the preamble.

\maketitle 

%%%%%%%%%%%%%%%%%%%%%%%%%%%%%%%%%%%%%%%%%%%%%%%%%%%%%%%%%%%%%%%%%%%%%%%%

\section{Introduction}

Fair division is one of the most fundamental and well-studied topics in Computational Social Choice with much significance and several applications in many real-life scenarios \cite{DBLP:journals/jasss/Seidl18, DBLP:journals/sigecom/GoldmanP14}.
Generally, there are some resources and our objective is to divide them among a group of competing agents in a fair manner.
In our discussion, we assume that the items to be allocated are \textit{goods}, whose valuations are nonnegative.
Arguably, the most compelling fairness notion is \textit{envy-freeness}, which is defined as each agent weakly preferring his own bundle to any other agent's bundle.

However, in the indivisible regime, the existence of envy-free solutions is not guaranteed.
For instance, if there are two agents but only one item, the agent who receives the item is certainly envied by the other one.
One of the natural relaxations of envy-freeness is \textit{envy-freeness up to one item (EF1)} \cite{DBLP:conf/bqgt/Budish10}.
In an EF1 allocation, every agent $i$ may envy another agent $j$, but the envy could be eliminated by removing one item from agent $j$'s bundle.
EF1 allocations are always guaranteed to exist and can be computed in polynomial time even for general valuations \cite{DBLP:conf/sigecom/LiptonMMS04}.
%Another widely-adopted notion to bypass the nonexistence is approximate relaxation \cite{DBLP:journals/siamdm/PlautR20, DBLP:conf/ijcai/AmanatidisBM18}.
%For instance, an allocation is $\epsilon$-proportional if, for each agent, it is sufficient to conceptually add $\epsilon$ to his valuation to make him satisfy proportionality \cite{DBLP:journals/dam/Suksompong19}.

Imposing some constraints to the model will make the fairness objective less tractable.
A series of works focus on the setting where items lie on a line and each agent obtains a consecutive block of items.
For monotone valuations, Bilò et al. \cite{DBLP:journals/geb/BiloCFIMPVZ22} present polynomial-time algorithms to compute contiguous EF1 allocations for any number of agents with identical valuations or at most three agents, and prove the existence of contiguous EF1 allocations for four agents.
More recently, it is shown that contiguous EF1 allocations for any number of agents always exist \cite{igarashi2022cut}.
%Nevertheless, the existence of EF1 allocations, which is tractable in general, remains open for more than four agents with the contiguity requirement.
%Besides, \cite{DBLP:journals/dam/Suksompong19} gives an efficient algorithm to compute a contiguous $(\frac{n-1}{n} \cdot v^{\max})$-proportional allocation, where $v^{\max}$ is the maximum valuation of a single item.
%This approximate factor is tight in some sense.

Another natural generalization assumes that items arrive online.
Furthermore, the types of future items are unknown and decisions have to be made instantly.
He et al. \cite{DBLP:conf/ijcai/HePPZ19} investigate this model with the requirement that the allocations returned after the arrival of each item are EF1 and consider additive valuations.
However, due to the negative results against adversaries when allocations are irrevocable \cite{DBLP:conf/sigecom/BenadeKPP18}, adjustments are necessary to achieve EF1 deterministically.
%Formally, the number of adjustments in each round $t$ is defined as the number of items that arrive before round $t$ and belong to different agents in round $t - 1$ and round $t$, and the objective is to minimize the overall number of adjustments throughout the entire period.
Notably, the $O(T^2)$ of adjustments is attained trivially by redistributing all items in each round, where $T$ is the number of items.
He et al. \cite{DBLP:conf/ijcai/HePPZ19} show that the $\Omega(T)$ of adjustments is inevitable for more than two agents, even if the information of all items is known upfront.
On the positive side, they give two algorithms with respectively $O(T^{1.5})$ and $O(nmT)$ adjustments, where $n$ is the number of agents and $m$ denotes the maximum number of distinct valuations for items among all agents.

In this work, we adopt the model proposed by He et al. \cite{DBLP:conf/ijcai/HePPZ19}.
Furthermore, we also impose the contiguity constraint on items.
More precisely, items are arranged on a line by the order they arrive and each agent obtains a contiguous block of items.
To motivate this, consider a library that has several bookshelves with books of the same types being placed together.
Moreover, the numbers of books of two different types should not differ by a large amount.
With more and more bookshelves being deployed, the library needs to reallocate a consecutive block of bookshelves to each certain type of book and redistribute the books according to the new allocation.
In this case, the objective is to minimize the cost of moving the books or equivalently, the number of adjustments.
We study the number of adjustments necessary to achieve some fairness guarantee in both contiguous and noncontiguous settings.

\subsection{Our Contributions}

%We say that an algorithm is fair concerning some fairness notion if the allocations returned by the algorithm satisfy such notion.
%Without further specification, we assume that the valuations are additive.

We first describe our positive results in the noncontiguous setting.
%For two agents, we extend the $O(T)$ upper bound given by \cite{DBLP:conf/ijcai/HePPZ19} for goods to the mixed manna case \cite{DBLP:conf/atal/LivanosMM22}, in which items may have both positive and negative valuations.
We show that if valuations are limited to be restricted additive\footnote{The valuations are restricted additive if they are additive and every item has an inherent valuation with every agent being interested in only some items \cite{DBLP:conf/ijcai/AkramiRS22}.} or general identical, EF1 can be achieved with no adjustments.
In addition, we give an EF1 algorithm for additive valuations that requires $O(mT)$ adjustments, improving the previous result of $O(nmT)$ adjustments.
Note that if $m$ is a constant, it matches the $\Omega(T)$ lower bound and thus is optimal.

With the contiguity constraint, EF1 is too stringent and we start with a weaker fairness notion.
It is known that contiguous $(\frac{n-1}{n} \cdot v^{\max})$-\textit{proportional (PROPa)} allocations can be computed efficiently, where $v^{\max}$ is the maximum valuation of items, and this additive approximate factor is tight in some sense \cite{DBLP:journals/dam/Suksompong19}.
%from $(\frac{n-1}{n}\cdot v^{\max})$-proportionality, denoted as PROPa, which we prove that is weaker than EF1.
%Denote $(\frac{n-1}{n} \cdot v^{\max})$-proportionality as PROPa.
We first consider identical valuations.
We give a PROPa algorithm that requires $O(nT)$ adjustments and then establish the matching lower bound to show that our algorithm is optimal.
When it comes to EF1, for two agents with general valuations, we show that EF1 is achievable with $O(T)$ adjustments, which is optimal.
If the valuation of each item is assumed to lie in $[L, R]$ such that $0 < L \leq R$, we give an EF1 algorithm that requires $O\left( (R / L) \cdot n^2 T\right)$ adjustments.
By contrast, in the nonidentical case, we show that it is hopeless to make any significant improvement even for additive valuations.
Specifically, we give instances to establish the lower bounds of $\Omega(T^2 / n)$ to achieve PROPa and $\Omega(T^2)$ for EF1.
%Even in the binary case where the valuation of each item is either $0$ or $1$, we prove that for two agents, the $\Omega(T^2)$ lower bound still pertains to achieving PROPa.
Our results in the contiguous setting with additive valuations are summarized in Table~\ref{tab:results-contiguous}.

Our results exhibit a large discrepancy between the identical and nonidentical cases in both contiguous and noncontiguous settings.
In addition, all the algorithms given in this paper can be implemented in polynomial time.

% Please add the following required packages to your document preamble:
% \usepackage{booktabs}
\begin{table}[t]
\centering
\caption{Results for the contiguous setting with additive valuations.
The result for two agents in the identical case also works for general valuations.
%Here we only demonstrate our results for additive valuations. Particularly, the $O(T)$ upper bound in the identical setting for two agents works for general valuations, and the $\Omega(T)$ lower bound in the nonidentical setting for two agents works for binary valuations.
}
\label{tab:results-contiguous}
\begin{tabular}{@{}lllllll@{}}
\toprule
        & Identical                        &                                  &                                                             &                      & \multicolumn{2}{l}{Nonidentical}                                         \\ \cmidrule(lr){2-4} \cmidrule(l){6-7} 
        & PROPa                            & \multicolumn{2}{l}{EF1}                                                                        &                      & PROPa                                 & EF1                               \\ \cmidrule(lr){3-4}
        &                                  & Lower                            & Upper                                                       &                      &                                       &                                   \\ \cmidrule(r){1-4} \cmidrule(l){6-7} 
$n = 2$ & \multicolumn{3}{c}{$\Theta(T)$}                                                                                                   & \multicolumn{1}{c}{} & \multicolumn{2}{c}{$\Theta(T^2)$}                                         \\
$n > 2$ & \multicolumn{1}{c}{$\Theta(nT)$} & \multicolumn{1}{c}{$\Omega(nT)$} & \multicolumn{1}{c}{$O\left(\frac{R}{L} \cdot n^2 T\right)$} & \multicolumn{1}{c}{} & \multicolumn{1}{c}{$\Omega(T^2 / n)$} & \multicolumn{1}{c}{$\Theta(T^2)$} \\ \bottomrule
\end{tabular}
\end{table}

\subsection{Related Work}

Even though both divisible and indivisible models are extensively studied, we only focus on the indivisible setting, which is more relevant to our work.

\paragraph{Dynamic fair division.}
Our work belongs to the vast literature of \textit{dynamic} or \textit{online} fair division \cite{DBLP:conf/aaai/AleksandrovW20}.
Under the assumption that valuations are normalized to $[0, 1]$ and items are allocated irrevocably, the maximum envy of $\tilde{O}(\sqrt{T / n})$ can be achieved deterministically and this bound is tight asymptotically \cite{DBLP:conf/sigecom/BenadeKPP18}.
To bypass the negative results, motivated by the notion of \textit{disruptions} \cite{DBLP:conf/sigecom/FriedmanPV15, DBLP:conf/sigecom/FriedmanPV17}, He et al. \cite{DBLP:conf/ijcai/HePPZ19} introduce adjustments to achieve EF1 deterministically.

Without the ability to adjust, another popular measure of compromise is assuming that agents' valuations are stochastic.
When the valuation of each agent to each item is drawn i.i.d. from some distribution, the algorithm of allocating each item to the agent with the maximum valuation for it is envy-free with high probability and ex-post Pareto optimal \cite{DBLP:journals/siamdm/ManurangsiS20, DBLP:journals/siamdm/ManurangsiS21}.
Besides, Bai et al. \cite{DBLP:conf/ijcai/BaiG22} provide the same guarantee for asymmetric agents, i.e., the valuation of an item for each agent is independently drawn from an agent-specific distribution.
Furthermore, assuming that the valuations of different agents for the same item are correlated, Pareto efficiency and fairness are also compatible \cite{DBLP:conf/sigecom/ZengP20}.
More recently, Benadè et al. \cite{benadedynamic} study the partial-information setting where only the ordinal information is revealed.
Another series of works \cite{DBLP:conf/ki/AleksandrovW17, DBLP:conf/pricai/AleksandrovW19} resort to random allocations to achieve ex-ante fairness together with some efficiency and incentive guarantees.

\paragraph{Fair division of contiguous blocks.}
The online fair division model with the contiguity requirement concerned in our work is a strict extension of the fair division of contiguous blocks problem.
In the offline setting, the existence of contiguous EF1 allocations is intensively studied \cite{DBLP:journals/geb/BiloCFIMPVZ22, DBLP:journals/siamdm/OhPS21, igarashi2022cut} and the approximate versions of proportionality, envy-freeness, and equitability are also considered \cite{DBLP:journals/dam/Suksompong19}.
More recently, Misra et al. \cite{DBLP:journals/corr/abs-2101-09794} designs an algorithm to compute a contiguous EQ1 allocation with the egalitarian welfare guarantee.
Besides, the \textit{price of fairness} of contiguous allocations for goods and chores are respectively established by Suksompong \cite{DBLP:journals/dam/Suksompong19} and Höhne et al. \cite{DBLP:journals/iandc/HohneS21}.
More generally, the connectivity relation among items is allowed to form a graph that possesses some structures with a path being a special case \cite{DBLP:conf/ijcai/BouveretCEIP17, DBLP:journals/aamas/BouveretCL19}.
\section{Preliminaries}
There are $n$ agents and let $N = [n]$ be the set of all agents.
There are $T$ items arriving one by one.
Let $g_t$ be the $t$-th item and $M_t = \{g_1, \ldots, g_t\}$ be the set of the first $t$ items that arrive.
In particular, define $M = M_T$.
Let $M_{l, r} = \{g_{l + 1}, \ldots, g_r\}$ be the set of items whose indexes are in $[l + 1, r]$ with $M_{0, r} = M_r$.
Each agent $i$ has a nonnegative valuation $v_i : 2^M \to \R_+$, which reflects his evaluation for each subset of items.
Denote $m = \max_{i \in N} |\{v_i(g) \mid g \in M\}|$ as the maximum number of distinct valuations for items among all agents.
To simplify the notations, for $S \subseteq M$ and $g \in M$, we use $v_i(g)$, $v_i(S + g)$ and $v_i(S - g)$ to respectively represent $v_i(\{g\})$, $v_i(S \cup \{g\})$ and $v_i(S \setminus \{g\})$.
An allocation for a set of items $M'$ is a tuple $(A_1, A_2, \ldots, A_n)$ satisfying $A_i \cap A_j = \emptyset$ for all $i \neq j$ and $A_1 \cup A_2 \cup \ldots \cup A_n = M'$, where $A_i$ is agent $i$'s bundle.

After the arrival of each item, algorithms should return an allocation for all the existing items.
We use the number of adjustments to measure the performance of algorithms.
Formally, let the allocations returned by an algorithm be $A^1, \ldots, A^T$, where $A^t$ is an allocation for $M^t$.
The number of adjustments in round $t$ is defined as the number of items $g \in M_{t - 1}$ such that $g$ belongs to different agents in $A^{t - 1}$ and $A^t$.
The number of adjustments required by an algorithm is defined as the total number of adjustments across all $T$ rounds.

%\paragraph{Valuation functions.}
Now we enumerate different types of valuation functions that are considered in this paper.
A valuation $v$ is \textit{additive} if for all $S \subseteq M$, we have $v(S) = \sum_{g \in S} v(g)$.
Valuations $v_1, \ldots, v_n$ are \textit{restricted additive} if they are additive and each item $g$ has an inherent valuation $v_g$ so that $v_i(g) \in \{0, v_g\}$ for every agent $i$.
    %\item Binary: If a valuation function $v$ is binary, then for any item $g$, we have $v(g) \in \{0, 1\}$.
A valuation $v$ is \textit{general} if it is only required to be monotone, i,e, $v(S + g) \geq v(S)$ for all $S \subseteq M$ and $g \in M$.
Valuations $v_1, \ldots, v_n$ are \textit{identical} if we have $v_1(\cdot) = \cdots = v_n(\cdot)$.
We assume valuations to be additive if without specification.

%\paragraph{Categories of items.}
%Items are classified into goods, chores, and mixed manna.
%Agents have nonnegative valuations for goods and nonpositive valuations for chores.
%An item being mixed manna means that it can be a good for some agents and, at the same time, a chore for some other agents.
%In this work, we assume all items to be goods without specification.

We formally define the fairness criteria including EF1 and PROPa in the following.

%\begin{definition}[EF]
%    An allocation $A$ is \textit{envy-free (EF)} if for any $i, j \in N$, we have $v_i(A_i) \geq v_i(A_j)$.
%\end{definition}

\begin{definition}[EF1]
    An allocation $A$ is \textit{envy-free up to one item (EF1)} if for any $i, j \in N$, either $v_i(A_i) \geq v_i(A_j)$ or there exists $g \in A_j$ such that $v_i(A_i) \geq v_i(A_j - g)$.
\end{definition}

%Notably, when the items are goods, the definition of EF1 can be simplified as taking away an item in agent $j$'s bundle.
%However, if there are chores or mixed manna, removing an item in agent $i$'s bundle may make agent $i$ better off if the item is a chore for agent $i$.

\begin{definition}[PROPa]
    Let $v^{\max} = \max_{i \in N, g \in M} v_i(g)$.
    An allocation $A$ is $(\frac{n - 1}{n} \cdot v^{\max})$-\textit{proprtional (PROPa)} if we have
    \begin{align*}
        v_i(A_i) \geq \frac{1}{n} \cdot v_i(M) - \frac{n - 1}{n} \cdot v^{\max}
    \end{align*}
    for any $i \in N$.
\end{definition}

The following proposition shows that EF1 is a stronger notion than PROPa.

\begin{proposition}\label{prop:EF1-implies-PROPa}
    If an allocation $A = (A_1, \ldots, A_n)$ is EF1, then it is also PROPa.
\end{proposition}

\begin{proof}
    Suppose that there are $n$ agents with valuations $v_1, \ldots, v_n$ and allocation $A = (A_1, \ldots, A_n)$ is EF1.
    Fix $i \in N$.
    If $A_j$ is empty for some $j \in N$, then we add an item with valuation $0$ for each agent to $A_j$.
    From now on, we assume that $A_1, \ldots, A_n$ are nonempty.
    By the definition of EF1, for any $j \in N$, there exists $g_j \in A_j$ such that
    \begin{align*}
        v_i(A_i) \geq v_i(A_j - g_j).
    \end{align*}
    Summing over all $j \neq i$, we obtain
    \begin{align*}
        (n - 1) \cdot v_i(A_i) \geq \sum_{j \neq i} v_i(A_j - g_j)
    \end{align*}
    Adding $v_i(A_i)$ to both sides,
    \begin{align*}
        n \cdot v_i(A_i) \geq v_i(M) - \sum_{j \neq i} v_i(g_j).
    \end{align*}
    Therefore,
    \begin{align*}
        v_i(A_i)
        & \geq \frac{1}{n} \cdot v_i(M) - \frac{1}{n} \sum_{j \neq i} v_i(g_j)\\
        & \geq \frac{1}{n} \cdot v_i(M) - \frac{n-1}{n} \cdot v^{\max}
    \end{align*}
    as desired.
\end{proof}
\section{Noncontiguous Setting}

%In this section, we present our positive results in the noncontiguous setting.
In this section, we consider the noncontiguous setting.
We first present greedy algorithms for restricted additive and general identical valuations to show that EF1 can be achieved at no cost in both cases.
Next, we give an algorithm that requires $O(mT)$ adjustments.
In Appendix~\ref{apx:two-agents-and-mixed-manna}, we show that when there are only two agents, we can achieve the optimal $O(T)$ adjustments even if the valuations are allowed to be negative.

\subsection{Restricted Additive or General Identical Valuations}

For additive valuations, recall that every EF1 algorithm requires at least $\Omega(T)$ adjustments \cite{DBLP:conf/ijcai/HePPZ19}.
We show that if we limit the valuations to restricted additive or general identical valuations, EF1 can be achieved with no cost: simple greedy algorithms is EF1 and require no adjustments for these two cases.
Notably, restricted additive valuations can be viewed as a slightly more general version of additive identical valuations by adding the assumption that each agent may be indifferent to some items.
As a consequence, there is a large discrepancy between the identical and nonidentical cases in the noncontiguous setting.

Firstly, we give a greedy algorithm for restricted additive valuations in Algorithm~\ref{greedy-algorithm-for-restricted-additive-agents}, whose idea is straightforward.
In each round, we simply give the new item to the agent with the minimum valuation for his current bundle among the agents that positively value the new item.
Without loss of generality, we assume that for every item, there exists an agent that has a positive valuation for it.
Since otherwise all agents are indifferent to the item and we just discard it.

\begin{algorithm}[!htp]
        \caption{Greedy Algorithm for Restricted Additive Valuations}
        \label{greedy-algorithm-for-restricted-additive-agents}
        \begin{algorithmic}[1]
            \Require {$v_i$ for each agent $a_i$}
            \State {$A = \emptyset$}
            \For {$t = 1, \ldots, T$}
                \State {$k \leftarrow \arg\min_{i:v_i(g_t) > 0} v_i(A_i)$}
                \State {$A_{k} \leftarrow A_{k} \cup \{g_t\}$}
                \State {$A^t \leftarrow A$}
            \EndFor
            \State {\Return $[A^1, A^2, \ldots, A^T]$}
        \end{algorithmic}
\end{algorithm}

\begin{theorem}
    Algorithm~\ref{greedy-algorithm-for-restricted-additive-agents} is an EF1 algorithm for restricted additive valuations that requires no adjustments.
\end{theorem}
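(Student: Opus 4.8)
The statement has two parts, and the adjustment count is the easy one: the plan is to observe that Algorithm~\ref{greedy-algorithm-for-restricted-additive-agents} never relocates an item once it is placed. In round $t$ it only inserts the newly arrived $g_t$ into one bundle and leaves $M_{t-1}$ untouched, so no item ever belongs to different agents in two consecutive allocations, and the total number of adjustments is exactly $0$. The substance is the EF1 guarantee, which I would prove directly for each returned allocation $A^t$ rather than through a global invariant: fixing an ordered pair of agents $(i,j)$, I aim to exhibit a single item of $A_j$ whose removal eliminates $i$'s envy, i.e. an item $g$ with $v_i(A_i) \ge v_i(A_j - g)$ (recall that for goods EF1 reduces to removing an item from the envied bundle).

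The key structural fact I would isolate first is that \emph{every owner values its whole bundle at the full inherent value}: since the algorithm hands $g_t$ only to an agent $k$ with $v_k(g_t) > 0$, and valuations are restricted additive, we have $v_k(g) = v_g$ for every $g$ that $k$ ever receives. Consequently, for any bundle $A_j$ and any other agent $i$, $v_j(A_j) = \sum_{g \in A_j} v_g \ge \sum_{g \in A_j,\, v_i(g) > 0} v_g = v_i(A_j)$, because $i$'s value of $A_j$ only sums a subset of the same nonnegative inherent values. This single observation is what lets the greedy selection rule, which compares agents' valuations of their \emph{own} bundles, control the cross-valuation $v_i(A_j)$.

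With that in hand, I would choose the witness item as follows. If $i$ values no item of $A_j$, then $v_i(A_j) = 0 \le v_i(A_i)$ and EF1 (indeed envy-freeness towards $j$) is immediate; otherwise let $g^{*}$ be the last item with $v_i(g^{*}) > 0$ that $j$ received, allocated in some round $s$, and let $B_i, B_j$ denote the bundles of $i$ and $j$ just before that allocation. Three facts then chain together: $i$ was a candidate in round $s$ (as $v_i(g^{*}) > 0$) while $j$ was the $\arg\min$, giving $v_j(B_j) \le v_i(B_i)$; agent $i$'s bundle only grows, giving $v_i(A_i) \ge v_i(B_i)$; and after round $s$ agent $j$ receives no further item valued by $i$ (by maximality of $g^{*}$), so $v_i(A_j - g^{*}) = v_i(B_j)$. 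Combining these with the structural fact yields
\[
v_i(A_i) \ge v_i(B_i) \ge v_j(B_j) \ge v_i(B_j) = v_i(A_j - g^{*}),
\]
which is exactly the EF1 condition witnessed by $g^{*}$.

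The step I expect to be the main obstacle, and the reason a naive argument stalls, is precisely the mismatch of valuation functions: the algorithm's selection rule speaks about $v_k(A_k)$ under each agent's \emph{own} valuation, whereas envy is measured entirely through $v_i$. The inequality $v_j(B_j) \le v_i(B_i)$ on its own says nothing about how much $i$ values $j$'s bundle. Bridging this gap is exactly where the restricted-additive structure is essential, via the owner-values-its-full-bundle observation $v_j(B_j) \ge v_i(B_j)$; I would flag that this is the only place the restriction $v_i(g) \in \{0, v_g\}$ (rather than arbitrary additive values) is used.
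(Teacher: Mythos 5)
Your proof is correct, and it takes a genuinely different route from the paper's. Both arguments hinge on the same structural fact --- that the greedy rule only ever gives an item to an agent who values it at its inherent value, so $v_i(A_j) \le \sum_{g \in A_j} v_g = v_j(A_j)$ for every pair $(i,j)$ --- but they deploy it differently. The paper proceeds by induction on rounds, maintaining EF1 as an invariant: when a new item $g$ arrives it splits the agents into those who value $g$ (call them $N_1$) and those who are indifferent, shows via the $\arg\min$ choice that every agent in $N_1$ did not envy the receiver $k$ \emph{at all} before the assignment (so removing $g$ afterwards restores non-envy), and argues the indifferent agents' EF1 status toward $k$ is unaffected. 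You instead give a direct, non-inductive argument for each returned allocation: for a fixed pair $(i,j)$ you explicitly construct the EF1 witness as the \emph{last} item $g^{*}$ received by $j$ that $i$ values, apply the $\arg\min$ comparison only at that one round $s$, and chain $v_i(A_i) \ge v_i(B_i) \ge v_j(B_j) \ge v_i(B_j) = v_i(A_j - g^{*})$. What your approach buys is an explicit, time-stamped witness item and a uniform treatment of indifferent agents (they are absorbed into the trivial case $v_i(A_j)=0$ and into the equality $v_i(A_j - g^{*}) = v_i(B_j)$), thereby avoiding the paper's somewhat informal dismissal of that case (``their existence is not likely to lead to the violation of EF1''); it also shows the slightly sharper statement that $i$'s envy toward $j$ never exceeds $v_i(g^{*})$. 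What the paper's induction buys is a round-by-round invariant that is arguably more standard and reveals the stronger intermediate property that candidates never envy the receiver even before the removal. One cosmetic point: like the paper, you should note the harmless convention that every item is valued positively by at least one agent, since otherwise the $\arg\min$ in the algorithm is over an empty set; this does not affect your EF1 argument.
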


\begin{proof}
    Algorithm~\ref{greedy-algorithm-for-restricted-additive-agents} makes no adjustments since once an item is assigned to an agent, it will be allocated to the same agent in the future.
    Now we prove that every returned allocation is EF1.
    First we prove the property that, for any $i, j \in N$, $v_i(A_j) \leq v_j(A_j)$ always holds.
    In other words, each bundle is valued the most by the agent who owns it.
    Since Algorithm~\ref{greedy-algorithm-for-restricted-additive-agents} only assigns an item to the agent who has positive valuation for it, for any agent $j$, we have $v_j(A_j) = \sum_{g \in A_j} v_g$, where $v_g$ is the inherent valuation of item $g$.
    Since agent $i$ may be indifferent to some items in $A_j$, we have $v_i(A_j) \leq \sum_{g \in A_j} v_g = v_j(A_j)$.
    
    Next, we prove that each allocation is EF1 by induction.
    Before the arrival of the first item, the allocation is trivially EF1.
    We will show that if the current allocation is EF1, then after the arrival of a new item, the newly generated allocation is also EF1.
    When a new item $g$ comes, define $N_1 = \{i \in N \mid v_i(g) > 0\}$ as the set of agents who are interested in $g$ and $N_2 = N \setminus N_1$ as the set of agents who are not.
    Since the agents in $N_2$ are indifferent to $g$, their presence cannot lead to the violation of EF1 no matter which agent $g$ is assigned to.
    Therefore, we are only concerned about the agents in $N_1$.
    Let $k \in N_1$ be the agent who gets $g$.
    By the choice of $k$, for any $i \in N_1$, we have $v_i(A_k) \leq v_k(A_k) \leq v_i(A_i)$ before the assignment of $g$, which means that each agent in $N_1$ does not envy agent $k$. Thus after allocating $g$ to agent $k$, each agent in $N_1$ will not envy agent $k$ up to one item.
    Therefore, the new allocation is also EF1.
\end{proof}

Then we give an algorithm for general identical valuations in Algorithm~\ref{greedy-algorithm-for-general-identical-agents}.
The allocating strategy of Algorithm~\ref{greedy-algorithm-for-general-identical-agents} is assigning the new item to the agent with the minimum valuation for his current bundle.

\begin{algorithm}[!htp]
        \caption{Greedy Algorithm for General Identical Valuations}
        \label{greedy-algorithm-for-general-identical-agents}
        \begin{algorithmic}[1]
            \Require {A valuation $v$ for all agents}
            \State {$A = \emptyset$}
            \For {$t = 1, \ldots, T$}
                \State {$k \leftarrow \arg\min_{i} v(A_i)$}
                \State {$A_k \leftarrow A_k \cup \{g_t\}$}
                \State {$A^t \leftarrow A$}
            \EndFor
            \State {\Return $[A^1, A^2, \ldots, A^T]$}
        \end{algorithmic}
\end{algorithm}

\begin{theorem}
    Algorithm~\ref{greedy-algorithm-for-general-identical-agents} is an EF1 algorithm for general identical valuations that requires no adjustments.
\end{theorem}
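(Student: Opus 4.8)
The plan is to mirror the structure of the proof for Algorithm~\ref{greedy-algorithm-for-restricted-additive-agents}, adapting it to the setting of general (monotone, not necessarily additive) identical valuations. Since identical valuations mean $v_1(\cdot) = \cdots = v_n(\cdot)$, I will drop the subscript and write $v(\cdot)$ for the common valuation function. First I would observe that no adjustments are made: once item $g_t$ is assigned to agent $k$ in round $t$, the algorithm never moves any previously allocated item, it only appends the incoming item to some bundle. Hence each item stays with its original owner forever, and the number of adjustments is zero.

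The main content is showing that every returned allocation is EF1. I would proceed by induction on the number of items. The base case (empty allocation) is trivially EF1. For the inductive step, suppose the allocation $A$ is EF1 before the arrival of $g_t$, and let $k = \arg\min_i v(A_i)$ be the agent receiving $g_t$. I need to verify that for every pair $i, j$, envy is eliminated by removing a single item from $j$'s bundle in the new allocation. The only bundle that changes is $A_k$, so for any pair $(i,j)$ with $j \neq k$ the EF1 condition is inherited directly from the inductive hypothesis (the bundles $A_i$ and $A_j$ are unchanged, and monotonicity ensures $v(A_i)$ only grows if $i = k$, which does not hurt $i$'s perspective). Thus the only pair I must check is $(i, k)$ for an arbitrary agent $i$: does agent $i$ envy agent $k$ up to one item after $g_t$ is added to $A_k$?

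For this, the key step is the choice of $k$ as the minimum-valued bundle. By definition of $k$, before the assignment we have $v(A_k) \leq v(A_i)$ for every $i$. Now after appending $g_t$, agent $i$'s bundle is $A_i$ and agent $k$'s bundle is $A_k + g_t$. Removing $g_t$ from $k$'s bundle restores it to $A_k$, so $v((A_k + g_t) - g_t) = v(A_k) \leq v(A_i)$, which is precisely the EF1 condition for the pair $(i, k)$ witnessed by the item $g_t \in A_i \cup A_k$. This handles the newly created potential envy toward $k$. I anticipate the main subtlety lies in the fact that general valuations are only monotone and not additive, so I cannot decompose $v(A_k + g_t)$ as $v(A_k) + v(g_t)$; the argument must rely solely on monotonicity and on the exact cancellation $(A_k + g_t) - g_t = A_k$, rather than on any additive bookkeeping of individual item values. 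Because identical valuations let me compare bundles from a single common viewpoint, the reasoning stays clean, and the EF1 guarantee follows immediately from the minimality of $k$ combined with this set-theoretic cancellation.
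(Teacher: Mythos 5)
Your proof is correct and takes essentially the same approach as the paper: induction on the number of items, with the key observation that the recipient $k$ has the minimum-valued bundle, so no one envies $k$ before the assignment and hence no one envies $k$ up to one item (namely $g_t$) afterward. You spell out the cases $(i,j)$ with $j \neq k$ and the reliance on monotonicity rather than additivity more explicitly than the paper does, but the argument is the same.
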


\begin{proof}
    Algorithm~\ref{greedy-algorithm-for-general-identical-agents} makes no adjustments since once an item is assigned to an agent, it will be allocated to the same agent in the future.
    Now we prove that every returned allocation is EF1 by induction.
    When there are no items, the allocation is trivially EF1.
    We will prove that if the current allocation is EF1, then after the arrival of a new item, the newly generated allocation is also EF1.
    When a new item $g$ comes, suppose that agent $k$ is the agent with the minimum valuation for his current bundle and thus gets $g$.
    Since all agents do not envy agent $k$ before the assignment by the choice of $k$, they will not envy agent $k$ up to one item in the new allocation.
    Therefore, the new allocation is also EF1.
    %This concludes the proof.
\end{proof}

\subsection{Additive Valuations}

%Recall that for arbitrary number of agents with additive valuations, \cite{DBLP:conf/ijcai/HePPZ19} shows a lower bound of $O(T)$ adjustments and gives two algorithms with respectively $O(T^{1.5})$ and $O(nmT)$ adjustments.
For additive valuations, we present an EF1 algorithm that requires $O(mT)$ adjustments in Algorithm~\ref{layer-updating-algorithm}.
Note that when $m$ is a constant, Algorithm~\ref{layer-updating-algorithm} is optimal.

The idea is based on the round-robin algorithm.
Recall that in the round-robin algorithm, agents pick their favorite items alternately in each round.
The resulting allocation is EF1 since, in each round, every agent prefers the item that he picks in this round to any item picked by some agent in the next round.
%On the contrary, this is also a sufficient condition for EF1.
Based on this observation, we aim to maintain the above property for the allocation in each round\footnote{This approach is applied in \cite{DBLP:conf/sigecom/BrustleDNSV20} to show that the allocation induced by recursively finding a maximum-weight matching between agents and items is EF1.}.
To this end, we maintain a round-robin structure by dividing items into multiple layers satisfying that each agent gets exactly one item in each layer.
In particular, each agent gets no more than one item in the last layer.
Moreover, it holds that in each layer, each agent prefers his item in this layer to any item in the next layer.
In this case, the allocation induced by this structure is EF1.

Algorithm~\ref{layer-updating-algorithm} is the realization of the above idea.
We use $C^k$ to store the allocation in layer $k$, where $C^k_a$ denotes the item obtained by agent $a$.
In particular, let $C^i_a = 0$ if agent $i$ does not obtain any item in layer $i$, and let $v_i(0) = 0$ for all $i \in N$.
When a new item arrives, the algorithm updates each layer sequentially.
Whenever there is an agent who prefers the new item to his item in the current layer, we swap the new item with his item in the current layer (Line~\ref{algoline:replace1}--Line~\ref{algoline:replace2}).
After all the replacements in the current layer are finished, we move to the next layer and repeat the above process.
In the last layer, we assign the new item to an arbitrary agent who has not obtained any item in this layer (Line~\ref{algoline:last-layer}).
Finally, we return the allocation collectively induced by all layers (Line~\ref{algoline:final-allocation}).

\begin{algorithm}[!htp]
    \caption{Layer Updating Algorithm}
    \label{layer-updating-algorithm}
    \begin{algorithmic}[1]
        \Require {$v_i$ for each agent $a_i$}
        %\State {$C^i_j \leftarrow 0$ for all $i = 1, \ldots, \lceil \frac{T}{n} \rceil$ and $j = 1, \ldots, n$}
        \For {$t = 1, \ldots, T$}
            \State {$k \leftarrow \lceil \frac{t}{n} \rceil$}
            \For {$i = 1, \ldots, k - 1$}
                \While {$G := \{a \in N \mid v_a(g_t) > v_a(C_a^i)\} \neq \emptyset$}\label{algoline:replace1}
                    \State {$\widehat{a} \leftarrow \arg\min_{a \in G} v_a(C_a^i)$}
                    \State {swap($g_t$, $C^i_{\widehat{a}}$)}\label{algoline:replace2}
                \EndWhile
            \EndFor
            \State {$C^{k}_{t - n * (k - 1)} \leftarrow g_t$}\label{algoline:last-layer}
            \State {Let $A^t$ be the combination of allocations $C^1, C^2, \ldots, C^k$}\label{algoline:final-allocation}
        \EndFor
        \State {\Return $[A^1, A^2, \ldots, A^T]$}
    \end{algorithmic}
\end{algorithm}

\begin{theorem}
    Algorithm~\ref{layer-updating-algorithm} is an EF1 algorithm that requires $O(mT)$ adjustments.
\end{theorem}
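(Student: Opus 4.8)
The plan is to isolate the structural invariant that the algorithm maintains, derive EF1 from it by the round-robin argument, and then bound the adjustments by a potential argument that exploits a monotonicity property of the swaps.

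\textbf{The invariant and EF1.} First I would state the invariant precisely: at the end of every round the layers $C^1, \dots, C^k$ form an allocation in which each agent holds exactly one item per layer (the last layer allowing empty slots, valued $0$ since $v_i(0)=0$), and for every layer index $\ell$ and all agents $a, b$ one has $v_a(C^\ell_a) \ge v_a(C^{\ell+1}_b)$. Granting this, EF1 follows exactly as for round-robin: fix agents $i, j$ and delete $j$'s first-layer item $C^1_j$. By additivity and the termwise inequalities $v_i(C^\ell_i) \ge v_i(C^{\ell+1}_j)$,
\begin{align*}
v_i(A_i) = \sum_{\ell} v_i(C^\ell_i) \ge \sum_{\ell \ge 1} v_i(C^{\ell+1}_j) = v_i(A_j - C^1_j),
\end{align*}
where the terminal term $v_i(C^k_i) \ge 0$ is discarded using that items are goods. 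Hence $A^t$ is EF1.

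\textbf{Maintaining the invariant (the crux).} The substantive step is to verify that inserting $g_t$ preserves the invariant. I would track the carried item as it descends through layers $1, \dots, k-1$, maintaining two claims about each layer $\ell$ it passes. First, every swap replaces the content of a slot $(\widehat a, \ell)$ by an item that $\widehat a$ values strictly more (this is exactly the membership condition $\widehat a \in G$), so the quantities $v_a(C^\ell_a)$ never decrease; consequently the domination of layer $\ell$ over the items already resident in layer $\ell+1$ is not destroyed. Second, when the while loop for layer $\ell$ terminates, $G = \emptyset$ means $v_a(g_t) \le v_a(C^\ell_a)$ for every agent $a$, so the carried item that descends to layer $\ell+1$ is dominated by all of layer $\ell$, which is precisely what the invariant demands at the boundary between layers $\ell$ and $\ell+1$. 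The delicate point is the bookkeeping for the ejected items: each swap pushes out an item that becomes the new carried item, and I must check that the selection $\widehat a = \arg\min_{a \in G} C^\ell_a$ is what keeps the within-layer assignment and the two adjacent layer boundaries simultaneously consistent with the invariant. I expect this case analysis to be the main obstacle, and would carry it out by arguing that the ejected items are, at each stage, the smallest candidates and therefore descend without violating domination one layer below.

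\textbf{Counting adjustments.} For the bound I would reuse the monotonicity from the first claim above: whenever the item occupying a fixed slot $(a, \ell)$ changes, the replacement is strictly more valuable to agent $a$. Since agent $a$ attains at most $m$ distinct item-values, the content of each slot changes at most $m$ times over the entire execution. There are $n \lceil T/n \rceil = O(T)$ slots, so the total number of swaps is $O(mT)$. Finally, each swap relocates at most two items, and in any round the pre-existing items that change owners are only those relocated by swaps in that round; summing over rounds, the number of adjustments is at most twice the number of swaps, i.e.\ $O(mT)$.
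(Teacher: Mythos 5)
Your proposal is correct and follows essentially the same route as the paper's proof: the same layer-domination invariant, the same round-robin summation argument for EF1, and the same adjustment count via per-slot monotonicity (each swap strictly increases the owner's value, at most $m$ distinct values per agent, $O(T)$ slots). The ``delicate point'' you flag resolves exactly as you sketch --- items already resident in a layer stay dominated because the slots of the layer above only improve, and the descending item is handled by the while-loop's termination condition --- and in fact the $\arg\min$ tie-breaking rule plays no role in the correctness argument.
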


\begin{proof}
    %We use layer $k$ to refer to allocation $C^k$.
    We first prove that throughout the algorithm, for each layer $k$, each agent prefers the item that he obtains in this layer to any item in layer $k + 1$, i.e., $v_i(C^k_i) \geq v_i(C^{k+1}_j)$ for any $i, j \in N$.
    Then we show that this condition is sufficient for EF1.
    
    Now we use induction to prove that
    \begin{align}
        v_i(C_i^k) \geq v_i(C_j^{k+1}), \quad \forall i, j \in N, \forall k > 0.\label{eqn:round-robin-property}
    \end{align}
    When there are no items, \eqref{eqn:round-robin-property} is trivially satisfied.
    We will show that if \eqref{eqn:round-robin-property} holds for the current allocation, then after the arrival of a new item, \eqref{eqn:round-robin-property} also holds for the newly generated allocation.
    Note that during the update of a layer, each replacement will strictly increase the valuation of the chosen agent for his item in this layer.
    Thus after each replacement, the chosen agent still prefers his item in this layer to any item in the next layer.
    Moreover, the update for a layer being finished indicates that each agent prefers his item in this layer to the new item.
    As a result, we can safely move on and update the next layer with the new item.
    
    We show that \eqref{eqn:round-robin-property} implies EF1.
    Suppose that $(C^1, C^2, \ldots)$ satisfies \eqref{eqn:round-robin-property} and let $A$ be the allocation induced by $(C^1, C^2, \ldots)$.
    %Now we prove that $A$ is EF1.
    %We assume that, if agent $j$ does not obtain an item in layer $k$, then we allocate an item $0$ to agent $j$ in this layer.
    Note that for any $i \in N$, we have $A_i = \{C^1_i, C^2_i, \ldots\}$.
    Thus for any agents $i, j \in N$, we have
    \begin{align*}
        v_i(A_i) = \sum_{k \geq 1} v_i(C^k_i)
        \geq \sum_{k \geq 1} v_i(C^{k+1}_j)
        = v_i(A_j - C_j^1),
    \end{align*}
    which means that agent $i$ does not envy agent $j$ up to one item.
    Therefore, $A$ is EF1.
    
    Finally, we prove that the algorithm uses $O(mT)$ adjustments in total.
    Notice that the number of adjustments is upper bounded by the number of replacements.
    Since a replacement will make the valuation of the chosen agent for his item in the current layer strictly larger and there are at most $m$ distinct valuations of each agent for items, the number of replacements for each agent in a certain layer is at most $m$.
    Therefore, the number of adjustments is $\lceil T / n \rceil \cdot n \cdot m = O(mT)$ since there are $\lceil T / n \rceil$ layers.
\end{proof}

We also notice that all the positive results in the noncontiguous setting share the property that the valuation of each agent will not decrease across the entire period.
In other words, agents will only be better off if they follow the dictation of adjustments and, as a result, they have the incentive to participate.
By contrast, in the contiguous setting, the incentive property is not compatible with EF1.
See Appendix~\ref{apx:incentive-property-of-rr-alg} for more discussion about the incentive issue.

\section{Contiguous Setting with Identical Valuations}
\label{sec:contiguous-setting-with-identical-valuations}

In this section, we assume that all items are arranged on a line by the order they arrive.
Moreover, we impose the constraint on the allocation that it has to be contiguous, i.e., each agent gets a contiguous block of items.
We assume valuations to be identical in this section, which can be represented by a valuation $v$ shared by all agents, and agent $i$ always gets the $i$-th block from left to right.
We will establish lower and upper bounds for both PROPa and EF1.
All the omitted proofs in this section can be found in Appendix~\ref{apx:missing-proofs-of-section-contiguous-setting-with-identical-valuations}.

\subsection{Upper Bounds for PROPa}

In the offline model, for any line of items $M$ and $n$ agents with nonidentical valuations, Suksompong \cite{DBLP:journals/dam/Suksompong19} designs an efficient algorithm to compute a PROPa contiguous allocation.
He also shows that $\frac{n - 1}{n} \cdot v^{\max}$ is the best additive approximation factor even for identical valuations and without the contiguity requirement.
%A direct implication is that the allocation satisfying \eqref{proportionality-condition} for all $i \in N$ must exist.

\begin{lemma}[Theorem 1 in \cite{DBLP:journals/dam/Suksompong19}]\label{proportional-lemma}
    Suppose that there are $n$ agents with nonidentical valuations and the items in $M$ are arranged on a line.
    Then contiguous PROPa allocations always exist and can be computed in polynomial time.
\end{lemma}

We start by describing the offline algorithm given by Suksompong \cite{DBLP:journals/dam/Suksompong19}.
The algorithm maintains a current block, iterates all items sequentially, and adds the new item to the current block.
In the beginning, all agents are active.
Whenever there exists an agent $i$ satisfying PROPa for the current block, then the current block is obtained by agent $i$.
After that, agent $i$ is deactivated and the current block becomes empty.
Finally, when all agents become inactive, the remaining items are assigned arbitrarily without violating the contiguity constraints.

For identical valuations in the online setting, we give an algorithm based on the offline algorithm that achieves PROPa with $O(nT)$ adjustments in Algorithm~\ref{online-proportional-algorithm-for-additive-identical-agents}.
Intuitively, we intend to use the offline algorithm as a subroutine to compute an allocation in each round independently.
If we successfully prove that each separating point between two adjacent blocks is nondecreasing throughout all rounds, then the $O(nT)$ upper bound is established.
A sufficient condition for the nondecreasing property of the separating points is that the constraint factor in round $t$, defined as $B_t := \frac{1}{n} \cdot v(M_t) - \frac{n - 1}{n} \cdot v^{\max}(M_t)$, is nondecreasing.
%When the constraint factor in round $t$, defined as $B_t := \frac{1}{n} \cdot v(M_t) - \frac{n - 1}{n} \cdot v^{\max}(M_t)$, is nondecreasing, it can be proved that all separating points are nondecreasing throughout the algorithm.
Unfortunately, this may not hold.
For example, if $n = 3$ and there are two items $g_1, g_2$ such that $v(g_1) = 0$ and $v(g_2) = 1$, then we have $B_1 = 0$ and $B_2 = -1/3 < B_1$.
%Since in such cases where $B_{t + 1} < B_t$, $A^t$ satisfies the constraint factor $B_t$ and adding the new item to the last block will only enable the allocation to be feasible for a larger constraint factor, we just arbitrarily assign the new item to obtain $A^{t + 1}$ from $A^t$, which satisfies the constraint factor $B_t > B_{t + 1}$.
The essential observation is that when $B_{t + 1} < B_t$ and $A^t$ satisfies the constraint factor $B_t$, adding an item to some block will only enable the allocation to be feasible for a larger constraint factor.
Therefore, if $B_{t + 1} < B_t$, we just arbitrarily assign the new item to obtain $A^{t + 1}$ from $A^t$.
In fact, we will show that if $B_{t + 1} < B_t^{\max}$ where $B_t^{\max} := \max_{i \in [t]} B_i$, then we are allowed to allocate the new item arbitrarily.

% Now we describe Algorithm~\ref{online-proportional-algorithm-for-additive-identical-agents} in detail.
%In Algorithm~\ref{online-proportional-algorithm-for-additive-identical-agents}, we use $B_t$ to denote the proportional constraint factor for each block in round $t$.
In Algorithm~\ref{online-proportional-algorithm-for-additive-identical-agents}, for any $i \in [n]$, $p_i$ serves as the index of the last item in the $i$-th block, where $p_0 = 0$ and $p_n = t$ are omitted.
We update $p_i$ sequentially by moving $p_i$ backward until the $i$-th block satisfies the proportional constraint, i.e., $v(M_{p_{i - 1}, p_{i}}) \geq B_t$ (Line~\ref{algoline:move-pointer1}--Line~\ref{algoline:move_pointer2}).

\begin{algorithm}[!htp]
    \caption{PROPa Algorithm for Identical Valuations}
    \label{online-proportional-algorithm-for-additive-identical-agents}
    \begin{algorithmic}[1]
        \Require {A valuation function $v$ for all agents}
        \State {$p_i \leftarrow 0$ for $i = 1, 2, \ldots, n-1$}
        \For {$t = 1, \ldots, T$}
            \State {$B_t \leftarrow \frac{1}{n} \cdot v(M_t) - \frac{n-1}{n} \cdot v^{\max}(M_t)$}
            \For {$i = 1, \ldots, n - 1$}
                \While {$v\left(M_{p_{i - 1}, p_i}\right) < B_t$}\label{algoline:move-pointer1}
                    \State {$p_i \leftarrow p_i + 1$}\label{algoline:move_pointer2}
                \EndWhile
                \State {$A^t_i \leftarrow M_{p_{i - 1}, p_i}$}
            \EndFor
            \State {$A^t_n \leftarrow M_{p_{n - 1}, t}$}
        \EndFor
        \State {\Return $[A^1, A^2, \ldots, A^T]$}
    \end{algorithmic}
\end{algorithm}

\begin{restatable}{theorem}{thmPROPaalgorithmforidenticalvaluations}
\label{thm:PROPa-algorithm-for-identical-valuations}
    Algorithm~\ref{online-proportional-algorithm-for-additive-identical-agents} is a PROPa algorithm for identical valuations that requires $O(nT)$ adjustments.
\end{restatable}

\subsection{Upper Bounds for EF1}
In this section, we consider EF1 as the fairness criterion, which, by Proposition~\ref{prop:EF1-implies-PROPa}, is stronger than PROPa.
We first give an EF1 algorithm for two agents with general identical valuations that requires $O(T)$ adjustments in Algorithm~\ref{EF1-algorithm-for-two-agents}.
At each round $t$, we first find the minimum index $i$ such that the total valuation of items on the left of $g_i$ (inclusively) is at least the total valuation of the remaining items\footnote{Such $i$ is called \textit{lumpy tie} in \cite{DBLP:journals/geb/BiloCFIMPVZ22}, which possesses some desirable properties that are useful to establish contiguous fair allocations.}, i.e., $v(M_i) \geq v(M_{i, t})$ (Line~\ref{algoline:find-lumpy-tie}).
% Denote $L = M_{i - 1}$ as the items on the left of $g_i$ and $R = M_t \setminus M_i$ as the items on the right of $g_i$.
Then the line of items is separated into two blocks by $g_i$, where $g_i$ belongs to the block with the smaller valuation (Line~\ref{algoline:break-by-lumpy-tie1}--Line~\ref{algoline:break-by-lumpy-tie2}).

\begin{algorithm}[!htp]
        \caption{EF1 Algorithm for Two Agents with General Identical Valuations}
        \label{EF1-algorithm-for-two-agents}
        \begin{algorithmic}[1]
            \Require {A valuation function $v$ for all agents}
            \For {$t = 1, \ldots, T$}
                \State {$i = \min\{j \mid v(M_j) \geq v(M_{j, t})\}$}\label{algoline:find-lumpy-tie}
                \If {$v(M_{i - 1}) \leq v(M_{i, t})$}\label{algoline:break-by-lumpy-tie1}
                    \State {$A^t = (M_i, M_{i, t})$}
                    \Else \State {$A^t = (M_{i - 1}, M_{i - 1, t})$}\label{algoline:break-by-lumpy-tie2}
                \EndIf
            \EndFor
            \State {\Return $[A^1, A^2, \ldots, A^T]$}
        \end{algorithmic}
\end{algorithm}

\begin{theorem}
    Algorithm~\ref{EF1-algorithm-for-two-agents} is an EF1 algorithm for two agents with general identical valuations that requires $O(T)$ adjustments.
\end{theorem}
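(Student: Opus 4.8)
The plan is to exploit the fact that every allocation the algorithm returns is a single \emph{cut} of the line: in round $t$ agent $1$ receives a prefix $M_{s_t}=\{g_1,\dots,g_{s_t}\}$ and agent $2$ receives the complementary suffix, where the separating index $s_t$ equals the lumpy tie $i$ when $v(M_{i-1})\le v(M_{i,t})$ and equals $i-1$ otherwise. I would first record the two defining properties of the lumpy tie $i=\min\{j\mid v(M_j)\ge v(M_{j,t})\}$: by the choice of $i$ we have $v(M_i)\ge v(M_{i,t})$, and by minimality of $i$ we have $v(M_{i-1})<v(M_{i-1,t})$. I would also note the purely set-theoretic identities $M_i-g_i=M_{i-1}$ and $M_{i-1,t}-g_i=M_{i,t}$, which hold for any monotone (not necessarily additive) $v$, so additivity will never be used.

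For the EF1 claim I would split on the two cases of Line~\ref{algoline:break-by-lumpy-tie1}. In the first case the allocation is $(M_i,M_{i,t})$: agent $1$ does not envy agent $2$ because $v(M_i)\ge v(M_{i,t})$, and agent $2$ does not envy agent $1$ after removing $g_i$ from agent $1$'s bundle because $v(M_i-g_i)=v(M_{i-1})\le v(M_{i,t})$ (the case hypothesis). In the second case the allocation is $(M_{i-1},M_{i-1,t})$: agent $2$ does not envy agent $1$ since $v(M_{i-1})<v(M_{i-1,t})$ by minimality, and agent $1$ does not envy agent $2$ after removing $g_i\in A_2$ because $v(M_{i-1,t}-g_i)=v(M_{i,t})<v(M_{i-1})$ (the negation of the case hypothesis). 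Thus every returned allocation is EF1.

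For the adjustment bound I would argue in two steps. First, since both $A^{t-1}$ and $A^t$ are cuts with separating indices $s_{t-1}$ and $s_t$, the only items in $M_{t-1}$ that change owner are those strictly between the two cut positions, so the number of adjustments in round $t$ is at most $|s_t-s_{t-1}|$. Second, I would prove that the lumpy tie $i_t$ is \emph{nondecreasing} in $t$: adding $g_t$ at the right end weakly increases $v(M_{j,t})$ for every $j$ (monotonicity of $v$) while leaving each $v(M_j)$ unchanged, so the condition $v(M_j)\ge v(M_{j,t})$ becomes harder to satisfy, the feasible set of indices shrinks, and its minimum cannot decrease. Writing $s_t=i_t+\epsilon_t$ with $\epsilon_t\in\{-1,0\}$, I then telescope: $\sum_t |s_t-s_{t-1}|\le \sum_t (i_t-i_{t-1})+\sum_t|\epsilon_t-\epsilon_{t-1}|\le (i_T-i_0)+O(T)=O(T)$, giving the desired total.

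The main obstacle is the control of the $\pm 1$ fluctuation in $s_t$. Although $i_t$ is cleanly monotone, the separating point itself can wobble down by one when the algorithm switches between the two cases in Line~\ref{algoline:break-by-lumpy-tie1}, so $s_t$ is not itself monotone and a naive telescoping on $s_t$ fails. The resolution is to telescope on $i_t$ (which is monotone) and bound the bounded-variation correction term $\sum_t|\epsilon_t-\epsilon_{t-1}|$ separately by $O(T)$; I would also take care of the boundary case $s_t=t$ (agent $2$ empty), where the reassignment count is still at most $|s_t-s_{t-1}|$ because the freshly arrived $g_t$ is not counted as an adjustment.
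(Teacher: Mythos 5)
Your proof is correct, and the EF1 half is essentially the paper's argument (the paper handles the two cases of Line~\ref{algoline:break-by-lumpy-tie1} by symmetry; you write both out, using exactly the same two inequalities from the definition and minimality of the lumpy tie). Where you genuinely diverge is the adjustment bound. The paper defines $f(t)$ (your $i_t$) and $h(t)$ (your $s_t$), proves $f$ is nondecreasing just as you do, but then invests in a further case analysis --- the key step being that if $f(t)=f(t+1)$ and $h(t)=f(t)$, then $v(M_{f(t)-1}) \leq v(M_{f(t),t}) \leq v(M_{f(t),t+1})$ forces $h(t+1)=f(t+1)$ --- to conclude that the cut point $h(t)$ itself is nondecreasing, after which a single telescoping sum gives at most $h(T)-h(1)\leq T$ adjustments. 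You instead telescope on the monotone quantity $i_t$ and absorb the $\epsilon_t\in\{-1,0\}$ offset into a bounded-variation term $\sum_t |\epsilon_t-\epsilon_{t-1}|\leq T$; this sidesteps the paper's case analysis entirely at the cost of a factor of roughly $2$ in the constant, and is arguably more robust since it never needs to understand how the case switch in Line~\ref{algoline:break-by-lumpy-tie1} interacts with the growth of the suffix. One side remark in your proposal is actually false, though harmlessly so: you assert that $s_t$ ``is not itself monotone'' and can wobble down by one, but the paper's case analysis shows the cut point never decreases (the switch from the first case back to the second case at a fixed lumpy tie is impossible, precisely because $v(M_{i,t})$ only grows). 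Since your bound nowhere relies on non-monotonicity, this does not affect correctness --- it only means the structural fact you chose to avoid proving is true after all.
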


\begin{proof}
    First, the $i$ specified in Line~\ref{algoline:find-lumpy-tie} is well-defined since $t \in \{j \mid v(M_j) \geq v(M_{j, t})\}$.
    We show that $A^t$ is EF1 for any $t \in [T]$.
    Fix $t \in [T]$ and define $i$ as in Line~\ref{algoline:find-lumpy-tie}.
    Let $L = M_{i - 1}$ and $R = M_{i, t}$, and we have $v(L + g_i) \geq v(R)$ by definition.
    Note that $v(R + g_i) > v(L)$ holds since otherwise $v(L) \geq v(R + g_i)$ and we know that $i - 1 \in \{j \mid v(M_j) \geq v(M_{j, t})\}$, which contradicts the minimality of $i$.
    Without loss of generality, suppose that $A^t = (L \cup \{g_i\}, R)$ by symmetry, which means that $v(L) \leq v(R)$.
    Thus agent $2$ does not envy agent $1$ up to one item.
    On the other hand, $v(L + g_i) \geq v(R)$ and thus agent $1$ does not envy agent $2$.
    As a result, $A^t$ is EF1 for any $t \in [T]$.
    
    It remains to prove the required number of adjustments.
    Define $f(t) = \min\{j \mid v(M_j) \geq v(M_{j, t})\}$ and $h(t)$ such that agent $1$ obtains exactly the first $h(t)$ items in $A^t$.
    By the monotonicity of $v$, we know that $f(t)$ is nondecreasing.
    Due to the allocation rule, we have $f(t) = h(t)$ or $f(t) = h(t) - 1$.
    It suffices to prove that $h(t)$ is nondecreasing since in this case, the total number of adjustments is
    \begin{align*}
        \sum_{t=1}^{T - 1} |h(t + 1) - h(t)| = \sum_{t=1}^{T - 1} (h(t + 1) - h(t)) = h(T) - h(1) \leq T.
    \end{align*}
    Fix $t < T$.
    If $f(t) < f(t + 1)$, it holds that
    \begin{align*}
        h(t) \leq f(t) \leq f(t + 1) - 1 \leq h(t + 1).
    \end{align*}
    Besides, if $f(t) = f(t + 1)$ and $h(t) = f(t) - 1$, then $h(t) \leq h(t + 1)$ trivially holds.
    Lastly, if $f(t) = f(t + 1)$ and $h(t) = f(t)$, we have $v(M_{f(t) - 1}) \leq v(M_{f(t), t})$.
    Since 
    \begin{align*}
        v(M_{f(t + 1) - 1})
        & = v(M_{f(t) - 1})
        \leq v(M_{f(t), t})\\
        & \leq v(M_{f(t), t + 1})
        = v(M_{f(t + 1), t + 1}),
    \end{align*}
    $g_{f(t + 1)}$ is allocated to agent $1$ in $A^{t + 1}$.
    It follows that $h(t + 1) = f(t + 1) = h(t)$.
    Therefore, $h(t)$ is nondecreasing.
\end{proof}

For any number of agents, we fail to establish such a nondecreasing property for the separating points, by which we can similarly achieve $O(nT)$ adjustments.
%In the offline setting, \cite{DBLP:journals/geb/BiloCFIMPVZ22} and \cite{DBLP:journals/siamdm/OhPS21} give efficient algorithms to compute a contiguous EF1 allocation for general identical valuations and additive identical valuations, respectively.
Nevertheless, under the assumption that the valuation of each item lies in some range $[L, R]$ such that $0 < L \leq R$, we employ the slightly modified version of Algorithm 1 in \cite{DBLP:journals/geb/BiloCFIMPVZ22} for any number of agents with additive identical valuations as a subroutine to obtain an EF1 allocation in each round independently.
This leads to an EF1 algorithm for additive identical valuations that requires $O\left( (R / L) \cdot n^2T\right)$ adjustments.

We present the offline algorithm in Algorithm~\ref{alg:EF1-offline-algorithm-for-arbitrary-many-agents} and the online algorithm in Algorithm~\ref{alg:EF1-online-algorithm-for-arbitrary-many-agents-with-range-assumption}.
For a contiguous allocation $A$, define $P(A) = (\ell_0, \ldots, \ell_n)$ such that $\ell_0 = 0$ and $P_i(A) = \ell_i$ is the index of the last item in $A_i$.
% For two contiguous allocations $A'$ and $A''$, we say that $P(A') \preceq P(A'')$ if $P(A')$ is lexicographically smaller than $P(A'')$.
Denote the \textit{leximin (lexicographic maximin)} allocation as the allocation that maximizes the lowest valuation among all agents, subject to which it maximizes the second-lowest one, and so on.
We say that $A$ is leximinmin if $A$ is the leximin allocation with lexicographically minimum $P(A)$.
%among all leximin allocations.
In Algorithm~\ref{alg:EF1-offline-algorithm-for-arbitrary-many-agents}, we first find a contiguous leximinmin allocation\footnote{For completeness, we show how to use Dynamic Programming to efficiently compute the contiguous leximinmin allocation in Appendix~\ref{apx:dynamic-programming-to-compute-leximin2-allocations}.} $A$ as the initial allocation and fix an agent $i$ with the minimum valuation in $A$.
Then we iterate from $1$ to $i - 1$ and from $n$ to $i + 1$ separately.
During each iteration $j$, whenever agent $i$ envies agent $j$ even up to one item, the item in $A_j$ that is closest to $A_i$ is moved to the adjacent block.
The only difference between Algorithm~\ref{alg:EF1-offline-algorithm-for-arbitrary-many-agents} in this paper and Algorithm 1 in \cite{DBLP:journals/geb/BiloCFIMPVZ22} is that Algorithm~\ref{alg:EF1-offline-algorithm-for-arbitrary-many-agents} uses the leximinmin allocation as the initial allocation rather than an arbitrary leximin allocation.
Intuitively, choosing the leximinmin allocation guarantees the property that each separating point between two adjacent blocks is nondecreasing (Lemma~\ref{lem:nondecreasing-leximin-allocations}), which is crucial for proving the upper bound.

\begin{algorithm}[!htp]
    \caption{EF1 Offline Algorithm}
    \label{alg:EF1-offline-algorithm-for-arbitrary-many-agents}
    \begin{algorithmic}[1]
        \Require {A valuation function $v$ for all agents and an item set $M$}
        \State {Let $A = (A_1, \ldots, A_n)$ be the contiguous leximinmin allocation of $M$}
        \State {Fix $i = \arg\min_{j \in N} v(A_j)$}\label{algoline:choosei}
        \For {$j = 1, \ldots, i - 1$}\label{algoline:start-loop}
            \While {agent $i$ envies agent $j$ even up to one item}
                \State {move the rightmost item of $A_j$ to $A_{j + 1}$}
            \EndWhile
        \EndFor
        \For {$j = n, \ldots, i + 1$}
            \While {agent $i$ envies agent $j$ even up to one item}
                \State {move the leftmost item of $A_j$ to $A_{j - 1}$}\label{algoline:end-loop}
            \EndWhile
        \EndFor
        \State {\Return $A$}
    \end{algorithmic}
\end{algorithm}

\begin{algorithm}[!htp]
    \caption{EF1 Online Algorithm}
    \label{alg:EF1-online-algorithm-for-arbitrary-many-agents-with-range-assumption}
    \begin{algorithmic}[1]
        \Require {A valuation function $v$ for all agents}
        \For {$t = 1, \ldots, T$}
            \State {$A^t \leftarrow $ the allocation returned by Algorithm~\ref{alg:EF1-offline-algorithm-for-arbitrary-many-agents} running with $v$ and $M_t$}
        \EndFor
        \State {\Return $[A^1, A^2, \ldots, A^T]$}
    \end{algorithmic}
\end{algorithm}

We first present the property of Algorithm~\ref{alg:EF1-offline-algorithm-for-arbitrary-many-agents}.

\begin{lemma}[Theorem 7.1 in \cite{DBLP:journals/geb/BiloCFIMPVZ22}]\label{lem:property-of-offline-EF1-algorithm-for-general-identical-valuations}
    For additive identical valuations, Algorithm~\ref{alg:EF1-offline-algorithm-for-arbitrary-many-agents} is EF1 and can be implemented in polynomial time.
    Moreover, throughout the two loops in Algorithm~\ref{alg:EF1-offline-algorithm-for-arbitrary-many-agents} (Line~\ref{algoline:start-loop}--Line~\ref{algoline:end-loop}), $A_i$ does not change, where $i$ is determined in Line~\ref{algoline:choosei} of Algorithm~\ref{alg:EF1-offline-algorithm-for-arbitrary-many-agents}.
\end{lemma}

Now we give the performance of Algorithm~\ref{alg:EF1-online-algorithm-for-arbitrary-many-agents-with-range-assumption}.

\begin{theorem}\label{thm:EF1-algorithm-for-additive-identical-valuations}
    Suppose that the valuations are additive identical and there exist $0 < L \leq R$ such that $v(g) \in [L, R]$ for any $g \in M$.
    Then Algorithm~\ref{alg:EF1-online-algorithm-for-arbitrary-many-agents-with-range-assumption} is EF1 and requires $O\left((R / L) \cdot n^2T\right)$ adjustments.
\end{theorem}

Before proving Theorem~\ref{thm:EF1-algorithm-for-additive-identical-valuations}, we give some useful lemmas.
Observe that without the contiguity constraint, leximin allocations are EF1 for additive identical valuations.
Since otherwise by selecting agents $i, j$ such that agent $i$ envies agent $j$ even up to one item and moving an item with a positive valuation of agent $j$ to agent $i$, we reach a contradiction.
Nevertheless, it is possible that none of the contiguous leximin allocations is EF1.
For instance, suppose that there are $3$ agents and $5$ items with valuations 1, 3, 1, 1, 1.
In this instance, the only contiguous leximin allocation is $(\{g_1\}, \{g_2\}, \{g_3, g_4, g_5\})$ and agent $1$ envies agent $3$ even up to one item.
%Lemma~\ref{lem:leximin-allocations-with-bounded-valuations} provides an upper bound for the valuation of each block in a leximin allocation, and Lemma~\ref{lem:nondecreasing-leximin-allocations} describes the monotonicity of the separating points in leximinmin allocations.
%The proof of Lemma~\ref{lem:nondecreasing-leximin-allocations} is deferred to Appendix~\ref{apx:proof-for-nondecreasing-leximin-allocations}.

In a contiguous leximin allocation $A = (A_1, \ldots, A_n)$, an essential observation is that the difference between the valuations of two adjacent blocks is upper bounded by $R$, where $R$ is the maximum valuation of items.
As a result, let $i = \arg\min_{j \in N} v(A_j)$ be the agent with the minimum valuation, then $v(A_j) - v(A_i)$ is upper bounded by $|i - j| R$.
We show in the following lemma that such inequality always holds throughout Algorithm~\ref{alg:EF1-offline-algorithm-for-arbitrary-many-agents}

\begin{restatable}{lemma}{lemleximinallocationswithboundedvaluations}
\label{lem:leximin-allocations-with-bounded-valuations}
    For additive identical valuations, suppose that the valuation of each item is at most $R$.
    %Let $A$ be a leximin contiguous allocation of $M$ and $i$ satisfy $v(A_i) \leq v(A_j)$ for all $j \in N$.
    Let $i = \arg\min_{j \in N} v(A_j)$ and $B = v(A_i)$ where $A$ is the initial contiguous leximinmin allocation.
    %Let $B = v(A_i)$, where $i$ is defined as Line~\ref{algoline:choosei} of Algorithm~\ref{alg:EF1-offline-algorithm-for-arbitrary-many-agents}.
    Then throughout Algorithm~\ref{alg:EF1-offline-algorithm-for-arbitrary-many-agents}, $v(A_j) \leq |i - j| R + B$ always holds for any $j \in N$.
\end{restatable}

\begin{proof}
    Due to symmetry, we assume that $i = 1$ and we will prove that
    \begin{align}
        v(A_j) \leq (j - 1) R + B \label{eqn:bounded-valuation-for-leximin-allocations}
    \end{align}
    always holds for any $j \in N$.
    Note that by Lemma~\ref{lem:property-of-offline-EF1-algorithm-for-general-identical-valuations}, $A_i$ does not change throughout the algorithm.
    Thus \eqref{eqn:bounded-valuation-for-leximin-allocations} always holds for $j = 1$.
    Now supposing for induction that \eqref{eqn:bounded-valuation-for-leximin-allocations} always holds for $j = k - 1$ where $k > 1$, we will show that \eqref{eqn:bounded-valuation-for-leximin-allocations} holds for $j = k$ all the time.
    Suppose for contradiction that $v(A_k) > (k - 1) R + B$ at some time.
    During iteration $k$, we will keep moving the leftmost item in $A_k$ to $A_{k - 1}$ until agent $i$ does not envy agent $k$ up to one item.
    Note that agent $i$ does not envy agent $k$ up to one item only if $v(A_k) \leq v(A_i) + R = B + R$, since otherwise for any $g \in A_k$, $v(A_k - g) \geq v(A_k) - R > v(A_i)$.
    Thus the total valuation of the items being moved from $A_k$ to $A_{k - 1}$ during iteration $k$ is at least
    \begin{align*}
        v(A_k) - (v(A_i) + R) > (k - 1) R + B - (B + R) = (k - 2) R,
    \end{align*}
    which, combining the fact that $v(A_{k - 1}) \geq B$, contradicts the induction assumption that \eqref{eqn:bounded-valuation-for-leximin-allocations} always holds for $j = k - 1$.
\end{proof}

The following lemma describes the monotonicity of the separating points between two adjacent blocks in the initial contiguous leximinmin allocations in Algorithm~\ref{alg:EF1-offline-algorithm-for-arbitrary-many-agents}.
This is essentially why we choose leximinmin allocations as initial allocations.
Recall that $P_i(A)$ is the index of the last item in bundle $A_i$.

\begin{lemma}\label{lem:nondecreasing-leximin-allocations}
    For additive identical valuations, let $X^t$ be the initial contiguous leximinmin allocation fixed by Algorithm~\ref{alg:EF1-offline-algorithm-for-arbitrary-many-agents} in round $t$.
    Then for every $t < T$ and $j \in N$, we have $P_j(X^t) \leq P_j(X^{t + 1})$.
\end{lemma}

%Lemma~\ref{lem:nondecreasing-leximin-allocations} implies the existence of a polynomial-time algorithm that requires $O(nT)$ adjustments to achieve leximin, which is a powerful tool with many applications.
Now it is sufficient to prove Theorem~\ref{thm:EF1-algorithm-for-additive-identical-valuations}.

\begin{proof}[Proof of Theorem~\ref{thm:EF1-algorithm-for-additive-identical-valuations}]
    Lemma~\ref{lem:property-of-offline-EF1-algorithm-for-general-identical-valuations} implies that Algorithm~\ref{alg:EF1-online-algorithm-for-arbitrary-many-agents-with-range-assumption} is EF1.
    It remains to prove the required number of adjustments.
    %Let $X^t$ be the initial leximinmin allocation and $i^t$ be the agent with the minimum valuation in $X^t$ in round $t$.
    %Let $i^t$ be the agent with the minimum valuation in $X^t$.
    We will use Lemma~\ref{lem:leximin-allocations-with-bounded-valuations} to show that for any $t \in [T]$, the number of adjustments required to transform $X^t$ to $A^t$ is upper bounded by $O(n^2R / L)$.
    Combining the monotonicity of $P_j(X^t)$ given by Lemma~\ref{lem:nondecreasing-leximin-allocations}, we can establish the desired upper bound by triangle inequality.
    
    Fix round $t$ and let $i = \arg\min_{j \in N} v(X^t_j)$.
    By Lemma~\ref{lem:leximin-allocations-with-bounded-valuations}, we know that $v(A_j) \leq v(A_i) + |i - j| R$ throughout Algorithm~\ref{alg:EF1-offline-algorithm-for-arbitrary-many-agents}.
    % By the definition of EF1, we know that in an EF1 allocation $A$, for any $i', j' \in N$, we have $|v(A_{i'}) - v(A_j)| < R$.
    Since by Lemma~\ref{lem:property-of-offline-EF1-algorithm-for-general-identical-valuations}, $A_i$ never changes and the valuation of each item is at least $L$, the number of items being moved from $A_j$ toward $A_i$ during the iteration $j$ is at most $(v(A_j) - v(A_i)) / L \leq |i - j| R / L$.
    Note that for two contiguous allocations $A'$ and $A''$, the number of adjustments that are required to transform $A'$ to $A''$ is $\sum_{j=1}^{n - 1} |P_j(A') - P_j(A'')|$.
    Thus the number of adjustments required by Algorithm~\ref{alg:EF1-offline-algorithm-for-arbitrary-many-agents} to transform the initial contiguous leximinmin allocation $X^t$ to the final EF1 allocation $A^t$ is
    \begin{align*}
        \sum_{j=1}^{n - 1} |P_j(X^t) - P_j(A^t)|
        \leq \sum_{j \neq i} |i - j|R / L \leq O(n^2R / L).
    \end{align*}
    Therefore, the number of adjustments required by Algorithm~\ref{alg:EF1-online-algorithm-for-arbitrary-many-agents-with-range-assumption} is
    \begin{align*}
        &\sum_{t=1}^{T - 1} \sum_{j=1}^{n - 1} |P_j(A^t) - P_j(A^{t + 1})|\\
        \leq & \sum_{t=1}^{T - 1} \sum_{j=1}^{n - 1} \Big(|P_j(A^t) - P_j(X^t)| + |P_j(X^t) - P_j(X^{t + 1})| \\
        & + |P_j(X^{t + 1}) - P_j(A^{t + 1})|\Big)\\
        \leq & \sum_{t=1}^{T - 1} \sum_{j=1}^{n - 1} |P_j(X^t) - P_j(X^{t + 1})| + O\left((R / L) \cdot n^2T\right)\\
        = & \sum_{t=1}^{T - 1} \sum_{j=1}^{n - 1} (P_j(X^{t + 1}) - P_j(X^t)) + O\left((R / L) \cdot n^2T\right)\\
        = & \sum_{j=1}^{n - 1}(P_j(X^T) - P_j(X^1)) + O\left((R / L) \cdot n^2T\right)\\
        \leq &O(nT) + O\left((R / L) \cdot n^2T\right) = O\left((R / L) \cdot n^2T\right),
    \end{align*}
    where the first equality is due to the monotonicity of $P_j(X^t)$ for any $j \in N$ given by Lemma~\ref{lem:nondecreasing-leximin-allocations}.
\end{proof}

\subsection{Lower Bounds}

We demonstrate that our upper bound to achieve PROPa for identical valuations is asymptotically tight.
Moreover, by Proposition~\ref{prop:EF1-implies-PROPa}, the same lower bound is also applied to EF1.

\begin{theorem}\label{thm:lower-bound-for-identical-valuations-and-proportionality}
    For identical valuations, every PROPa algorithm requires at least $\Omega(nT)$ adjustments.
\end{theorem}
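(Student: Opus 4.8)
The statement is a lower bound of $\Omega(nT)$ adjustments for any PROPa algorithm under identical valuations. To prove a lower bound against an online algorithm, the natural strategy is an adversary argument: I would construct a specific sequence of item valuations $v(g_1), v(g_2), \ldots$ so that, regardless of how the algorithm allocates, it is forced to make many adjustments. The target is $\Omega(nT)$, so over $\Theta(T)$ rounds the adversary should force $\Omega(n)$ adjustments per round (on average). Since the matching upper bound (Algorithm~\ref{online-proportional-algorithm-for-additive-identical-agents}) shows each item gets adjusted at most $n-1$ times as the separating points drift rightward, the adversary's job is to force essentially every separating point to move by $\Omega(1)$ in each of $\Omega(T)$ rounds.

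\textbf{The adversarial construction I would use.}
First I would fix the contiguity and the ordering convention (agent $i$ gets the $i$-th block). The key pressure point in PROPa is the additive slack term $\frac{n-1}{n} v^{\max}(M_t)$: when a large item arrives, it raises $v^{\max}$ and loosens the constraint, but when only small items arrive afterward the constraint $v(A_i) \geq \frac{1}{n} v(M_t) - \frac{n-1}{n} v^{\max}(M_t)$ tightens as $v(M_t)$ grows. The plan is to alternate between injecting one large item (to reset $v^{\max}$) and then feeding a block of small items whose cumulative value repeatedly crosses the threshold that forces the boundary between block $i$ and block $i+1$ to advance past the next small item. Concretely, I would arrange values so that after each small item the unique PROPa-feasible contiguous partition shifts the right endpoint of some block forward, and by choosing the magnitudes so that the $(\frac{n-1}{n} v^{\max})$ slack is small relative to the per-item small value, I can force the allocation in round $t+1$ to differ from round $t$ in the ownership of $\Omega(n)$ already-present items. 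The cleanest route is to make the small items identical (say value $\epsilon$) and the large resets of value much larger, so that PROPa pins down the boundaries almost exactly at the proportional cut points, and each new small item pushes every one of the $n-1$ cut points one slot to the right, costing $n-1$ adjustments per round over a $\Theta(T)$-length phase.

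\textbf{Formalizing the forced adjustments.}
For the lower bound I must show the algorithm has \emph{no choice}: any PROPa allocation in round $t+1$ must relabel $\Omega(n)$ items relative to any PROPa allocation in round $t$. I would argue this by pinning the feasible region tightly — choosing parameters so that in the relevant rounds the PROPa constraint, together with contiguity, admits only allocations whose cut vector $\ell(A)$ lies in a narrow range, and showing that this range in round $t+1$ is disjoint-in-coordinates (shifted) from the range in round $t$ by at least one item in each of $\Omega(n)$ coordinates. Summing over the $\Omega(T)$ rounds of the phase then yields $\Omega(nT)$ total adjustments. I expect the main obstacle to be exactly this pinning step: I must verify that no alternative contiguous PROPa allocation lets the algorithm ``cheat'' by leaving some boundary fixed and compensating elsewhere (for instance by giving one agent a much larger bundle to keep a boundary still). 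To close this, I would exploit that the PROPa lower constraint is symmetric across agents and that the total value $v(M_t)$ is distributed so that every block must carry close to its proportional share $\frac{1}{n} v(M_t)$ once $v^{\max}$ is small, leaving little room to shift value between blocks without violating PROPa for some agent; this squeezes each boundary into a forced advance. Care is also needed to keep $T$ in the right regime (ensuring $n \leq T$ so that $\Omega(nT)$ is meaningful) and to confirm the large reset items are sparse enough that the construction occupies $\Theta(T)$ productive rounds rather than being dominated by resets.
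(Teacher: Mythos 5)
Your high-level strategy---use PROPa plus contiguity to pin the cut points near the proportional positions, which drift rightward as items arrive, and then equate total cut-point movement with adjustments---is exactly the right one, and it is the strategy behind the paper's proof of Theorem~\ref{thm:lower-bound-for-identical-valuations-and-proportionality}. However, your concrete construction does not work as described, and the step you yourself flag as "the main obstacle" (pinning) is precisely what is missing. The "large reset items" are counterproductive and internally inconsistent: the slack is $\frac{n-1}{n}\cdot v^{\max}$, so the moment you inject an item of value $V \gg \epsilon$, every boundary acquires wiggle room of roughly $\frac{n-1}{n}\cdot V/\epsilon$ small items, and PROPa no longer "pins the boundaries almost exactly." Your stated parameter requirement---that the slack $\frac{n-1}{n}v^{\max}$ be small relative to the per-item value $\epsilon$---forces $V \lesssim \epsilon$, contradicting $V \gg \epsilon$. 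The construction that actually closes the gap is the degenerate case of your small-item phase with the resets deleted: \emph{all} $T$ items have value $1$. Then $v^{\max}=1$ and integrality kills the slack entirely: each block needs value at least $\lceil \frac{T}{n} - \frac{n-1}{n}\rceil = \lfloor T/n\rfloor$, so in every round $T = kn$ each block must contain \emph{exactly} $k$ items and the allocation is completely determined---the pinning step becomes a one-line counting argument, but only for this equal-value instance. This is the paper's proof.

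Your per-round accounting also needs repair. It is impossible for all $n-1$ cut points to advance one slot per round for $\Theta(T)$ rounds: that would mean blocks $2,\ldots,n$ never grow, while the required share $\frac{1}{n}v(M_t) - \frac{n-1}{n}$ keeps increasing, violating PROPa after $O(n)$ rounds. In steady state, cut point $i$ sits near $i t/n$ and drifts at rate $i/n$, so the forced movement is $\sum_{i=1}^{n-1} i/n = (n-1)/2$ per round amortized---still $\Omega(n)$, so the bound survives, but the claim of $n-1$ adjustments in every round does not. The paper instead uses per-item accounting, which avoids this issue cleanly: for $t \geq n^2$ and $T = kn$, item $g_t$ lies in block $\lceil t/k \rceil$ (since block boundaries are exactly at multiples of $k$), so as $T$ grows, $g_t$ is successively owned by every agent $i \geq nt/(T-n)$, contributing at least $n - \lceil nt/(T-n)\rceil$ adjustments; summing over $t \geq n^2$ gives $\Omega(nT)$. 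Either accounting works once the instance is the all-ones instance, but as written your proposal neither fixes a consistent instance nor completes the pinning argument.
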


\begin{proof}
    Suppose that there are $n$ agents with identical valuations and each agent has a valuation of $1$ for every item.
    The proportional constraint factor for $M_T$ that should be satisfied by each block is
    \begin{align*}
        \left \lceil \frac{T}{n} - \frac{n-1}{n} \right \rceil = \left\lfloor \frac{T}{n} \right\rfloor.
    \end{align*}
    Denote the number of items that have arrived as $T$.
    We first prove that, for a large enough $t$, $g_t$ must come to every agent's block at least once as $T$ grows larger.
    In other words, for every agent $i$, there exists $T_i$ such that in round $T_i$, $g_t$ belongs to agent $i$.
    Then we use this property to show the desired lower bound.
    
    We will prove that, for $t \geq n^2$, $g_t$ must come to every agent's block at least once as $T$ grows larger.
    Fix $t \geq n^2$.
    Note that in round $T = t$, $g_t$ must belong to agent $n$, since agent $n$'s block cannot be empty.
    % Suppose, for the sake of contradictory, that there is an agent $i$ such that $g_t$ will never belong to agent $i$.
    Now we show that for any $i < n$, agent $i$ will obtain $g_t$ at some time.
    % Denote $R_i$ as the number of items obtained by the first $i$ agents.
    Suppose $t = ki - r$, where $0 \leq r < i$ and $k \geq n$.
    In round $T = kn > t$, each agent should obtain exactly $k$ items.
    Then $g_t$ belongs to agent $i$ since $k (i - 1) < t \leq k i$.
    
    Now it is sufficient to give the lower bound.
    By the above arguments, for any $t \geq n^2$, $T \geq t$ and $i \geq nt / (T - n)$, we know that $g_t$ must have been obtained by agents $i, i + 1, \ldots, n$ previously.
    Thus the number of adjustments made on $g_t$ throughout $T$ rounds is at least $n - \lceil nt / (T - n) \rceil$.
    Summing over all $t \geq n^2$, we have
    \begin{align*}
        \sum_{t = n^2}^T \left( n - \left \lceil \frac{nt}{T - n} \right \rceil \right)
        &\geq \sum_{t = n^2}^T \left( n - \frac{nt}{T - n} - 1 \right)\\
        &\geq (n - 1) (T - n^2)  - \sum_{t=1}^T \frac{nt}{T - n}\\
        &= (n - 1) (T - n^2) - \frac{nT(T + 1)}{2 (T - n)}\\
        &= \Omega(nT),
    \end{align*}
    where the last equality is due to $T \gg n$.
    Therefore, the number of adjustments required by any PROPa algorithm is at least $\Omega(nT)$.
\end{proof}

\begin{corollary}
    For identical valuations, every EF1 algorithm requires at least $\Omega(nT)$ adjustments.
\end{corollary}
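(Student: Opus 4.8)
The plan is to observe that this is an immediate consequence of combining the two preceding results, so the proof should be very short. The key point is that, under identical valuations, EF1 implies PROPa; once this is established, any EF1 algorithm is automatically a PROPa algorithm, and the claimed $\Omega(nT)$ bound follows directly from Theorem~\ref{thm:lower-bound-for-identical-valuations-and-proportionality}.

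Concretely, I would first invoke the preceding lemma, which asserts that any EF1 allocation (even for nonidentical valuations) satisfies the proportionality condition \eqref{proportionality-condition}, namely $v_i(A_i) \geq \frac{1}{n} v_i(M) - \frac{n-1}{n} v_i^{\max}(M)$ for every $i \in N$. The only thing that needs checking is that, specialized to identical valuations $v_1 = \cdots = v_n = v$, condition \eqref{proportionality-condition} coincides exactly with PROPa. This holds because $v_i^{\max}(M) = \max_{g \in M} v(g)$ is the same for every agent $i$ and equals $v^{\max} = \max_{i \in N, g \in M} v_i(g)$ in the identical case; hence the per-agent bound $\frac{1}{n} v(M) - \frac{n-1}{n} v_i^{\max}(M)$ is precisely the PROPa bound $\frac{1}{n} v(M) - \frac{n-1}{n} v^{\max}$.

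With that identification in hand, the argument closes in one step: if an algorithm is EF1 for identical valuations, then every allocation it returns is EF1, so by the lemma every such allocation is PROPa; that is, the algorithm is also a PROPa algorithm. Applying Theorem~\ref{thm:lower-bound-for-identical-valuations-and-proportionality}, which states that every PROPa algorithm for identical valuations requires $\Omega(nT)$ adjustments, we conclude that every EF1 algorithm requires at least $\Omega(nT)$ adjustments as well.

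I do not anticipate any real obstacle here, since both ingredients are already available. The one point warranting explicit mention—rather than a difficulty—is the verification that the agent-specific quantity $v_i^{\max}(M)$ in \eqref{proportionality-condition} collapses to the global $v^{\max}$ appearing in the definition of PROPa under the identical-valuations assumption; this is what licenses treating an EF1 algorithm as a PROPa algorithm and thereby transfers the lower bound verbatim.
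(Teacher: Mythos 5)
Your proposal is correct and follows exactly the paper's route: the paper derives this corollary by combining the preceding lemma (EF1 implies condition \eqref{proportionality-condition}, hence PROPa, since $v_i^{\max}(M) \leq v^{\max}$ with equality under identical valuations) with Theorem~\ref{thm:lower-bound-for-identical-valuations-and-proportionality}. Your explicit check that the agent-specific bound collapses to the global $v^{\max}$ in the identical case is a fine point of care, but the argument is the same one the paper intends.
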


\section{Contiguous Setting with Nonidentical Valuations}
\label{sec:contiguous-setting-with-nonidentical-valuations}

In this section, for nonidentical valuations with contiguity requirement, we show that it is hopeless to make any significant improvement to the trivial $O(T^2)$ upper bound for both PROPa and EF1.
All the omitted proofs in this section can be found in Appendix~\ref{apx:missing-proofs-of-contiguous-nonidentical-setting}.

\begin{theorem}\label{thm:non-identical-lower-bound-for-proportionality}
    For nonidentical valuations, every PROPa algorithm requires at least $\Omega(T^2 / n)$ adjustments.
\end{theorem}

\begin{proof}
    \begin{figure}[t]
        \centering
    	% lower-bound-of-non-identical-valuations-for-prop

% fig1

\begin{tikzpicture}[thick,scale=0.6, every node/.style={transform shape}]
    \node[font=\fontsize{15}{6}\selectfont] at (0,0) {$v_n(g_t)$};
    \node[font=\fontsize{15}{6}\selectfont] at (0,1) {$\vdots$};
    \node[font=\fontsize{15}{6}\selectfont] at (0,2) {$v_2(g_t)$};
    \node[font=\fontsize{15}{6}\selectfont] at (0,3) {$v_1(g_t)$};
    
    \node[font=\fontsize{15}{6}\selectfont] at (1,3) {$1$};
    \node[font=\fontsize{15}{6}\selectfont] at (1.5,3) {$\ldots$};
    \node[font=\fontsize{15}{6}\selectfont] at (2,3) {$1$};
    
    \node[font=\fontsize{15}{6}\selectfont] at (2.5,2) {$1$};
    \node[font=\fontsize{15}{6}\selectfont] at (3,2) {$\ldots$};
    \node[font=\fontsize{15}{6}\selectfont] at (3.5,2) {$1$};
    
    \node[font=\fontsize{15}{6}\selectfont] at (4,1) {$\ldots$};
    
    \node[font=\fontsize{15}{6}\selectfont] at (4.5,0) {$1$};
    \node[font=\fontsize{15}{6}\selectfont] at (5,0) {$\ldots$};
    \node[font=\fontsize{15}{6}\selectfont] at (5.5,0) {$1$};
    
    \node[font=\fontsize{15}{6}\selectfont] at (6.2,3) {$n^2$};
    \node[font=\fontsize{15}{6}\selectfont] at (6.9,3) {$\ldots$};
    \node[font=\fontsize{15}{6}\selectfont] at (7.6,3) {$n^2$};
    
    \node[font=\fontsize{15}{6}\selectfont] at (8.2,2) {$n^2$};
    \node[font=\fontsize{15}{6}\selectfont] at (8.9,2) {$\ldots$};
    \node[font=\fontsize{15}{6}\selectfont] at (9.6,2) {$n^2$};
    
    \node[font=\fontsize{15}{6}\selectfont] at (10.2,1) {$\ldots$};
    
    \node[font=\fontsize{15}{6}\selectfont] at (10.9,0) {$n^2$};
    \node[font=\fontsize{15}{6}\selectfont] at (11.6,0) {$\ldots$};
    \node[font=\fontsize{15}{6}\selectfont] at (12.3,0) {$n^2$};
    
    \node[font=\fontsize{15}{6}\selectfont] at (13.2,3) {$\ldots$};
    \node[font=\fontsize{15}{6}\selectfont] at (13.2,2) {$\ldots$};
    \node[font=\fontsize{15}{6}\selectfont] at (13.2,1) {$\ldots$};
    \node[font=\fontsize{15}{6}\selectfont] at (13.2,0) {$\ldots$};
    
    \draw [decorate,decoration = {brace,amplitude=2mm},ultra thick] (0.9, 3.3) -- (5.7, 3.3);
    \node[font=\fontsize{15}{6}\selectfont] at (3.4,4) {$n^2$};
    
    \draw [decorate,decoration = {brace,mirror,amplitude=2mm},ultra thick] (4.4, -0.3) -- (5.6, -0.3);
    \node[font=\fontsize{15}{6}\selectfont] at (5,-0.9) {$n$};
\end{tikzpicture}
        \caption{Figure illustrating the instance in the proof of Theorem~\ref{thm:non-identical-lower-bound-for-proportionality}.
        \label{fig:lower-bound-of-non-identical-valuations-for-prop}
        Each period of length $n^2$ is divided into $n$ blocks of length $n$, and agent $i$ is only interested in the items in the $i$-th block in each period $c$ with $n^{2c}$ valuation for each of these items.
        The valuation $0$ is omitted.}
    \end{figure}

    Suppose that there are $T$ items and $n$ agents with valuations $v_1, \ldots, v_n$.
    For any $t \in [T]$, let
    \begin{align*}
        v_i(g_t) = 
        \begin{cases}
            n^{2c}, & cn^2 + (i - 1) n < t \leq cn^2 + in \text{ for some } c \geq 0,\\
            0, & \text{otherwise},
        \end{cases}
    \end{align*}
    for all $i \in N$.
    As illustrated in Figure~\ref{fig:lower-bound-of-non-identical-valuations-for-prop}, each period of length $n^2$ is divided into $n$ blocks of length $n$, and agent $i$ is only interested in the items in the $i$-th block in each period $c$ with $n^{2c}$ valuation for each of these items.
    It suffices to show that, for every $k \geq 2n$, the first $nk - n^2$ items belong to one agent in round $nk$ and another agent in round $nk + n$.
    Since it follows that the number of adjustments is at least $\sum_{k=2n}^{T/n - 1} (nk - n^2) = \Theta(T^2 / n)$, where we assume $T \gg n$.
     
    We only give the proof for $k$ such that $k$ is a multiple of $n$, and the proof can be easily generalized to any $k \geq 2n$.
    Now we show that for any $c \geq 2$ and $k = cn$, the first $nk - n^2 = (c - 1) n^2$ items belong to agent $1$ in round $nk$ and belong to agent $2$ in round $nk + n$.
    Note that in round $nk$, the proportional constraint factor for all agents is
    \begin{align*}
        & \frac{1}{n} \left( n \sum_{j=0}^{c - 1} n^{2j} \right) - \frac{n - 1}{n} \cdot n^{2(c - 1)}
        = n^{2c - 3} + \sum_{j=0}^{c - 2} n^{2j},
    \end{align*}
    and the total valuation of the first $nk - n^2$ items for each agent is
    \begin{align*}
        n \sum_{j=0}^{c - 2} n^{2j} = \frac{n^{2(c - 1)} - 1}{n + 1} + \sum_{j=0}^{c - 2} n^{2j} < n^{2c - 3} + \sum_{j=0}^{c - 2} n^{2j}.
    \end{align*}
    Define $G_i = \{g_t \mid (c - 1) n^2 + (i - 1) n < t \leq (c - 1) n^2 + in\}$ as the set of items that arrive during period $c$ and agent $i$ is interested in.
    To satisfy the proportional constraint, each agent $i$ should get at least one of the items in $G_i$.
    Due to the contiguity requirement, for every $i \in N$, agent $i$ must get the $i$-th block which should contain at least one item in $G_i$.
    As a result, the first $nk - n^2 = (c - 1) n^2$ items belong to agent $1$.
    Similarly, we can show that in round $nk + n$, the first $nk - n^2 + n$ items belong to agent $2$ and we are done.
    %Therefore, any PROPa algorithm requires at least $\Omega(T^2 / n)$ adjustments.
\end{proof}

When it comes to EF1, the lower bound is even stronger, indicating that we cannot make any improvement to the trivial $O(T^2)$ upper bound.
The proof of Theorem~\ref{thm:non-identical-lower-bound-for-EF1} is analogous to the proof of Theorem~\ref{thm:non-identical-lower-bound-for-proportionality} except that the length of a period is reduced from $n^2$ to $O(n)$.
%, and can be found in Appendix~\ref{apx:proof-for-non-identical-lower-bound-for-EF1}.

\begin{restatable}{theorem}{thmnonidenticallowerboundforEF}
\label{thm:non-identical-lower-bound-for-EF1}
    For nonidentical valuations, every EF1 algorithm requires at least $\Omega(T^2)$ adjustments.
\end{restatable}

Note that in the hard instance given to prove Theorem~\ref{thm:non-identical-lower-bound-for-proportionality}, the valuations of items are unbounded.
Nevertheless, even for $2$ agents with binary valuations, i.e. $v_i(g) \in \{0, 1\}$ for all $i$ and $g$, the $\Omega(T^2)$ lower bound still pertains\footnote{For $2$ agents with binary valuations, PROPa is equivalent to EF1.}.
%See Appendix~\ref{apx:lower-bounds-for-binary-agents} for more details

\begin{restatable}{theorem}{thmlowerboundforbinaryagents}
\label{thm:lower-bound-for-2-binary-agents}
    For $2$ agents with binary valuations, every EF1 algorithm requires at least $\Omega(T^2)$ adjustments.
\end{restatable}
\section{Conclusion and Future Work}

%In this paper, we study the problem of dynamically allocating indivisible items in a fair manner with few adjustments in both the contiguous and noncontiguous settings.
We conclude with some directions for future work.
\begin{itemize}
    \item Even though we have established almost tight upper and lower bounds to achieve PROPa in the contiguous setting or when $n = 2$, there are still large gaps between the upper and lower bounds to achieve EF1 in both continuous and noncontinuous settings.
    The first future direction is to tighten these bounds.
    
    \item If the types of items are drawn from certain distributions rather than chosen adversely, can we show a better upper bound in expectation or asymptotically in both settings\footnote{Similar problems are presented in \cite{DBLP:journals/dam/Suksompong19}.}?
    
    \item It would be interesting to investigate other fairness notions like EQ1 \cite{DBLP:journals/corr/abs-2101-09794}.
    
    %\item With the contiguity constraint, we showed that it is impossible to make each agent better off if we fix the order of agents.
    %When the order of agents is allowed to change, can we make any progress if the incentive property must hold?
    
    \item Another promising direction, like always asked in  the offline setting \cite{DBLP:journals/corr/abs-2204-14229}, is to achieve fairness and Pareto optimality simultaneously.
    This has been shown in some other online models \cite{DBLP:conf/sigecom/ZengP20, DBLP:conf/pricai/AleksandrovW19}.
\end{itemize}
\begin{acks}
	We would like to thank Alexandros Psomas for the discussion about the content in the early stage of this work and his helpful suggestion on the presentation.
    We are also grateful to Ruta Mehta for inspiring us to think about the incentive issue in this model and to anonymous reviewers for their useful comments.
\end{acks}

%%%%%%%%%%%%%%%%%%%%%%%%%%%%%%%%%%%%%%%%%%%%%%%%%%%%%%%%%%%%%%%%%%%%%%%%

%%% The next two lines define, first, the bibliography style to be 
%%% applied, and, second, the bibliography file to be used.

\bibliographystyle{ACM-Reference-Format} 
\bibliography{reference}

%%%%%%%%%%%%%%%%%%%%%%%%%%%%%%%%%%%%%%%%%%%%%%%%%%%%%%%%%%%%%%%%%%%%%%%%

\clearpage
\onecolumn
\appendix
\section{Two Agents and Mixed Manna}
\label{apx:two-agents-and-mixed-manna}

In this section, we allow valuations to be negative, which is often referred to as \textit{mixed manna}.
For two agents, we show that $O(T)$ adjustments are sufficient to achieve EF1 in Algorithm~\ref{envy-balancing-algorithm-for-mixed-manna}.
The algorithm is the extension of the Envy Balancing Algorithm for two agents with nonnegative valuations given by He et al. \cite{DBLP:conf/ijcai/HePPZ19}, which assumes that we know the information of all items in advance and requires no adjustments.

The idea of Algorithm~\ref{envy-balancing-algorithm-for-mixed-manna} is based on the envy-cycle elimination algorithm, which can be used to efficiently compute an EF1 allocation \cite{DBLP:conf/sigecom/LiptonMMS04}.
Recall that an allocation $A$ is \textit{envy-free (EF)} if $v_i(A_i) \geq v_i(A_j)$ for any $i, j \in N$.
In Algorithm~\ref{envy-balancing-algorithm-for-mixed-manna}, we maintain two disjoint allocations $G$ and $C$ satisfying that $G$ is EF and $C$ is EF1.
Since the combination of an EF allocation and an EF1 allocation is also EF1, the returned allocation, which is the combination of $G$ and $C$, is also EF1.
To this end, we only assign the new item to an agent in $C$ while maintaining the EF1 property of $C$ (Line~\ref{algoline:allocate1}--Line~\ref{algoline:allocate2}).
Whenever both agents envy each other in $C$, we swap their bundles in $C$ to make $C$ an EF allocation (Line~\ref{algoline:adjust1}--Line~\ref{algoline:adjust2}).
Once $C$ becomes EF, we merge $C$ into $G$ (Line~\ref{algoline:merge1}--Line~\ref{algoline:merge2}).

\begin{algorithm}[!htp]
        \caption{Envy Balancing Algorithm for Mixed Manna}
        \label{envy-balancing-algorithm-for-mixed-manna}
        \begin{algorithmic}[1]
            \Require {$v_1, v_2$}
            \State {$G \leftarrow (\emptyset, \emptyset), C \leftarrow (\emptyset, \emptyset)$}
            \For {$t = 1, \ldots, T$}
                \If {($a_1$ is unenvied in $C$ $\land$ $v_1(g_t) > 0$) $\lor$ ($v_1(g_t) > 0 \land v_2(g_t) \leq 0$) $\lor$ ($a_1$ does not envy $a_2$ in $C$ $\land$ $v_1(g_t) \leq 0$ $\land$ $v_2(g_t) \leq 0$)}\label{algoline:allocate1}
                    \State {$C \leftarrow (C_1 \cup \{g_t\}, C_2)$}
                    \Else \State {$C \leftarrow (C_1, C_2 \cup \{g_t\})$}\label{algoline:allocate2}
                \EndIf
                \If {$a_1$ and $a_2$ envy each other in $C$}\label{algoline:adjust1}
                    \State {$C \leftarrow (C_2, C_1)$}\label{algoline:adjust2}
                \EndIf
                \If {$C$ is envy-free}\label{algoline:merge1}
                    \State {$G \leftarrow (G_1 \cup C_1, G_2 \cup C_2)$}
                    \State {$C \leftarrow (\emptyset, \emptyset)$}\label{algoline:merge2}
                \EndIf
                \State {$A^t = (C_1 \cup G_1, C_2 \cup G_2)$}
            \EndFor
            \State {\Return $[A^1, A^2, \ldots, A^T]$}
        \end{algorithmic}
\end{algorithm}

\begin{theorem}
    Algorithm~\ref{envy-balancing-algorithm-for-mixed-manna} is an EF1 algorithm for two agents and mixed manna that requires $O(T)$ adjustments.
\end{theorem}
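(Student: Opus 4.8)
The plan is to maintain throughout the execution the two invariants that $G$ is EF and $C$ is EF1, and to argue separately that only the swap step (Line~\ref{algoline:adjust1}) ever reassigns an already-present item. Granting the invariants, the returned allocation $A^t=(C_1\cup G_1, C_2\cup G_2)$ is EF1: this is the combination lemma asserted in the text, which I would prove first. By additivity $v_i(A_i)=v_i(G_i)+v_i(C_i)$, and if $g\in C_i\cup C_j$ witnesses agent $i$'s EF1 in $C$, then the same $g$ witnesses EF1 of $A$ for agent $i$; in both the case $g\in C_i$ and the case $g\in C_j$ one combines $v_i(G_i)\ge v_i(G_j)$ (from $G$ being EF) with the EF1 inequality $v_i(C_i-g)\ge v_i(C_j-g)$, noting that removing $g$ from the appropriate side leaves the untouched side's value unchanged. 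The same computation with $v_i(C_i)\ge v_i(C_j)$ in place of the EF1 inequality shows that merging an EF allocation $C$ into an EF allocation $G$ keeps $G$ EF, so the $G$-invariant is preserved at every merge (Line~\ref{algoline:merge1}).

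The crux is the $C$-invariant, which I would strengthen to: at the start of each round $C$ is EF1 and not in mutual envy (at most one agent envies the other). Write $e_1=v_1(C_2)-v_1(C_1)$ and $e_2=v_2(C_1)-v_2(C_2)$ for the two signed envies, so that EF means $e_1\le 0$ and $e_2\le 0$ while mutual envy means $e_1>0$ and $e_2>0$. Adding $g_t$ to $C_1$ changes $e_1$ by $-v_1(g_t)$ and $e_2$ by $+v_2(g_t)$, and symmetrically for $C_2$; hence an agent's envy can only increase when a good for that agent is placed in the other agent's bundle or a chore for that agent is placed in its own bundle. I would verify, by inspecting each branch of the allocation rule (Line~\ref{algoline:allocate1}), that whenever the rule increases $e_1$ (resp.\ $e_2$) the guard of the executed branch forces $e_1\le 0$ (resp.\ $e_2\le 0$) beforehand: a good for agent $1$ enters $C_2$ only in the final ``otherwise'' branch, which requires that agent $1$ is envied and that the item is also a good for agent $2$, so the non-mutual-envy invariant forces $e_1\le0$; a chore for agent $1$ enters $C_1$ only through the third disjunct, whose guard requires $e_1\le 0$; the cases for agent $2$ are symmetric. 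Consequently the increased envy is bounded by the value of the single item $g_t$ and is annihilated by removing $g_t$ (a good from the other's bundle, or a chore from one's own bundle), so $C$ remains EF1 after Line~\ref{algoline:allocate2}. This case analysis, and in particular getting the chore branches right for mixed manna, is the main obstacle.

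It remains to handle the possibility that adding $g_t$ produces mutual envy. In that event the swap $C\leftarrow(C_2,C_1)$ is applied: since $e_1>0$ and $e_2>0$ mean that each agent strictly prefers the other's current bundle, after swapping each agent holds its preferred bundle, whence the new envies are $-e_1\le 0$ and $-e_2\le 0$ and $C$ becomes EF. Thus every swap is immediately followed by a merge that empties $C$, and at the end of each round $C$ is either empty or exhibits one-sided envy; either way it is EF1 and non-mutual, so the strengthened invariant is restored by induction.

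Finally I would bound the adjustments. A merge reassigns no item, since each $g\in C_i$ merely moves to $G_i$ and so the owner of $g$ in the combined allocation $A^t$ is unchanged, and items placed in $G$ are never touched again. Hence adjustments arise only from swaps, and a swap in round $t$ reassigns exactly the items of $M_{t-1}$ then residing in $C$. Because each swap is immediately followed by a merge that empties $C$, any item participates in at most one swap during its entire stay in $C$; summing over all rounds, the total number of reassignments is at most $T$, which gives the claimed $O(T)$ bound.
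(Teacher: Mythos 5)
Your proposal is correct and follows essentially the same route as the paper's proof: the invariants that $G$ is EF and $C$ is EF1, the combination lemma, a branch-by-branch check that the guards of Line~\ref{algoline:allocate1} prevent EF1 from breaking, the observation that mutual envy plus a swap yields an EF allocation that is immediately merged, and the conclusion that each item is swapped at most once, giving $O(T)$ adjustments. Your explicit ``no mutual envy'' invariant and signed-envy bookkeeping ($e_1$, $e_2$) are just a cleaner packaging of the paper's remark that EF1 of $C$ forces at least one unenvied agent and one non-envying agent.
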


\begin{proof}
    We first prove that every returned allocation is EF1.
    Since the combination of an EF allocation and an EF1 allocation is EF1, it suffices to prove that $G$ is always EF and $C$ is always EF1.
    Notice that we merge $C$ into $G$ only if $C$ is EF, and the combination of two EF allocations is EF.
    Thus $G$ is always EF.
    
    Now we prove that $C$ is always EF1.
    We only need to show that, if $C$ is EF1 at the beginning of the current iteration, it is also EF1 after the iteration.
    If $C$ is EF1, at least one of the agents is unenvied in $C$ since otherwise the bundles in $C$ would have been swapped in the previous iteration, which makes $C$ an EF allocation.
    Thus there is also at least one agent who does not envy the other agent in $C$.
    After $g_t$ arrives, according to Line~\ref{algoline:allocate1}, if it is a good for an agent and a chore for the other agent, i.e., $v_1(g_t) > 0 \land v_2(g_t) \leq 0$ or $v_1(g_t) \leq 0 \land v_2(g_t) > 0$, it will be assigned to the agent who views it as a good, which ensures that both $v_1(C_1) - v_1(C_2)$ and $v_2(C_2) - v_2(C_1)$ don't decrease after the assignment of $g_t$.
    In this case, the EF1 property of $C$ is maintained.
    Next, if the new item is a good for both agents, i.e., $v_1(g_t) > 0 \land v_2(g_t) > 0$, the agent who it is assigned to is unenvied by the other agent in $C$.
    %Since no one will envy the agent who gets the new item after removing it, the newly generated $C$ is EF1 by definition.
    Thus by definition, $C$ is EF1 after the assignment of $g_t$.
    Finally, if the new item is a chore for both agents, i.e., $v_1(g_t) \leq 0 \land v_2(g_t) \leq 0$, the agent who it is assigned to does not envy the other agent in $C$.
    %Since the agent who gets the new item envies no one after removing it, the newly generated $C$ is EF1 by definition.
    By definition, $C$ is EF1 after the assignment of $g_t$.
    
    It remains to prove the number of adjustments.
    Note that adjustments happen only when both agents envy each other in $C$, and, in this case, their bundles in $C$ are swapped, which contributes $|C_1 \cup C_2|$ adjustments.
    After the swap, $C$ becomes EF and is merged into $G$.
    Since no adjustments are made on $G$, each item is adjusted at most once.
    Therefore, the total number of adjustments is $O(T)$.
\end{proof}

\section{The Incentive Issue}
\label{apx:incentive-property-of-rr-alg}

It is natural to ask about the incentive issue.
That is, apart from the objective of fairness, we also hope that each agent is better off after the adjustments to make them happy with the new allocation.
It is not difficult to see that all the algorithms given in this paper in the noncontiguous setting satisfy the incentive property.
Furthermore, we prove in Lemma~\ref{lem:incentive-property-of-round-robin} that if we re-run the round-robin algorithm with a deterministic tie-breaking rule upon the arrival of a new item, the valuation of each agent will not decrease in the new allocation.
This immediately leads to the incentive property of the algorithms given by He et al. \cite{DBLP:conf/ijcai/HePPZ19} which are based on the round-robin algorithm.

%We will show that if we re-run the round-robin algorithm with a deterministic tie-breaking rule upon the arrival of a new good, then the valuation of each agent will not decrease.
%This immediately leads to the incentive property of the algorithms given by \cite{DBLP:conf/ijcai/HePPZ19} which are based on the round-robin algorithm.

Now we prove the incentive property of the round-robin algorithm.
Recall that in the round-robin algorithm, agents pick their favorite items alternately in each round.
Suppose that there are $n$ agents and $m$ items $g_1, \ldots, g_m$.
In the outcome of the round-robin algorithm, we use $C^j_i$ to denote the index of the item obtained by agent $i$ in round $j$.
In particular, $C^j_i = 0$ if agent $i$ does not obtain any item in round $j$, and let $v_i(0) = 0$ for all $i \in N$.
Note that there are $\lceil m / n \rceil$ rounds in total.
When a new item $g_{m + 1}$ arrives, we run the round-robin algorithm again to allocate all the $m + 1$ items.
Let $\widetilde{C}^j_i$ denote the new outcome analogously.
%Now we formally present the incentive property in the following lemma.

\begin{lemma}\label{lem:incentive-property-of-round-robin}
    For every $1 \leq j \leq \lceil (m + 1) / n \rceil$ and $i \in N$, we have $v_i(C^j_i) \leq v_i(\widetilde{C}^j_i)$.
\end{lemma}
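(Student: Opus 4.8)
The plan is to couple the two executions of the round-robin algorithm turn by turn and argue by induction on a global turn index. Index the picks by a \emph{turn} number: agent $i$'s pick in round $j$ occurs at turn $\tau = (j-1)n + i$, and this correspondence between turns and (round, agent) pairs is identical in the two executions because the order of agents is fixed. Write run $A$ for the execution on $\{g_1, \ldots, g_m\}$ (producing the $C^j_i$) and run $B$ for the execution on $\{g_1, \ldots, g_{m+1}\}$ (producing the $\widetilde{C}^j_i$), and let $S^A_\tau$ and $S^B_\tau$ denote the sets of items still available at the start of turn $\tau$ in the two runs. The goal is to show that at every turn the acting agent receives a weakly preferred item in run $B$.

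The crux is the invariant that run $B$ always has exactly one extra available item: for every turn $\tau$ there is an item $y_\tau \notin S^A_\tau$ with $S^B_\tau = S^A_\tau \cup \{y_\tau\}$. This holds at $\tau = 1$ with $y_1 = g_{m+1}$. For the inductive step, let agent $i$ act at turn $\tau$, let $a$ be its favorite (under the fixed deterministic tie-break) in $S^A_\tau$, and let $b$ be its favorite in $S^B_\tau = S^A_\tau \cup \{y_\tau\}$. If $b = y_\tau$, then since $b$ maximizes over a superset of $S^A_\tau \ni a$ we get $v_i(b) \geq v_i(a)$, and after the picks $S^B_{\tau+1} = S^A_\tau = S^A_{\tau+1} \cup \{a\}$, so the invariant persists with the new extra item $y_{\tau+1} = a$. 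If $b \neq y_\tau$, then $b \in S^A_\tau$; since the tie-break is a fixed total order on items that does not depend on the available set, an element of $S^A_\tau$ that maximizes over the superset also maximizes over $S^A_\tau$, so $b = a$ and the agent receives the same item, while $S^B_{\tau+1} = S^A_{\tau+1} \cup \{y_\tau\}$ keeps the invariant with $y_{\tau+1} = y_\tau$. In either case the invariant is preserved and the acting agent does weakly better in run $B$.

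Finally I would translate this turn-wise conclusion back to the statement. For a turn $\tau = (j-1)n + i$ with $\tau \leq m$, both runs hand out an item and the coupling gives $v_i(C^j_i) \leq v_i(\widetilde{C}^j_i)$ directly. For $\tau = m+1$, run $A$ has exhausted its items so $C^j_i = 0$ and $v_i(C^j_i) = 0$, whereas $S^B_{m+1} = \{y_{m+1}\}$ is nonempty, so agent $i$ receives $y_{m+1}$; since the items are goods, $v_i(\widetilde{C}^j_i) \geq 0 = v_i(C^j_i)$. For $\tau > m+1$ both runs give nothing and the inequality is an equality. As $j$ ranges over $1, \ldots, \lceil (m+1)/n \rceil$, the turn $\tau = (j-1)n + i$ ranges over all values up to $\lceil (m+1)/n \rceil \cdot n$, so every case is covered.

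The main obstacle is establishing and maintaining the single-extra-item invariant, in particular the two-case analysis showing the agent never loses: the delicate point is that when the extra item is chosen (the case $b = y_\tau$) the identity of the extra item changes to the item run $A$ just took, whereas when it is not chosen the extra item is unchanged. The other point that needs care is verifying that the deterministic tie-breaking is consistent across the two runs, which is exactly what makes the equality $b = a$ rigorous in the second case.
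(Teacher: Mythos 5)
Your proposal is correct and follows essentially the same argument as the paper: the paper's induction maintains exactly your invariant (the available set in the run on $m+1$ items equals the other run's available set plus one extra item), with the identical two-case analysis where the extra item either gets picked and is replaced by the item the shorter run just took, or is untouched and deterministic tie-breaking forces both runs to pick the same item. Your treatment is slightly more explicit about the boundary turn $\tau = m+1$ (where the paper relies implicitly on the convention $v_i(0)=0$), but this is a presentational difference, not a different proof.
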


\begin{proof}
    Without loss of generality, we assume that the agent with a smaller index always picks an item before the agent with a larger index in each round.
    Define
    \begin{align*}
        M_i^j = \{g_1, \ldots, g_m\} \setminus \{C_k^{\ell} \mid \text{either } \ell < j \text{ or } \ell = j, k < i\}
    \end{align*}
    as the set of remaining items before agent $i$ pick an item in round $j$, and
    \begin{align*}
        \widetilde{M}_i^j = \{g_1, \ldots, g_{m + 1}\} \setminus \{\widetilde{C}_k^{\ell} \mid \text{either } \ell < j \text{ or } \ell = j, k < i\}
    \end{align*}
    analogously.
    Since
    \begin{align*}
        C_i^j = \arg\max_{g \in M_i^j} v_i(g), \quad
        \widetilde{C}_i^j = \arg\max_{g \in \widetilde{M}_i^j} v_i(g),
    \end{align*}
    it suffices to show that $M_i^j \subseteq \widetilde{M}_i^j$, which trivially holds when $j = 0$.
    
    Assume that $i < n$ and the case of $i = n$ is treated similarly.
    Suppose for induction that $M_i^j \subseteq \widetilde{M}_i^j$ is satisfied.
    We show that $M_{i + 1}^j \subseteq \widetilde{M}_{i + 1}^j$.
    Since $|M_i^j| + 1 = |\widetilde{M}_i^j|$, let $g$ be the only item in $\widetilde{M}_i^j \setminus M_i^j$.
    %Now there are two possible cases.
    On one hand, if $\widetilde{C}_i^j = g$, then $\widetilde{M}_{i + 1}^j = M_{i + 1}^j \cup \{C_i^j\}$.
    On the other hand, if $\widetilde{C}_i^j \neq g$, we have $\widetilde{C}_i^j = C_i^j$ and thus $\widetilde{M}_{i + 1}^j = M_{i + 1}^j \cup \{g\}$.
    Therefore, $M_{i + 1}^j \subseteq \widetilde{M}_{i + 1}^j$ holds.
    %The proof is finished by induction.
\end{proof}

By contrast, with the contiguity constraint, suppose that the order of agents stays the same throughout the entire period, i.e., agent $i$ obtains the $i$-th block, which is the default assumption for identical valuations in this work.
In this case, the incentive property is not compatible with EF1, even for two agents with additive identical valuations.
For instance, there are three items with valuations of $1$, $3$ and $2$, respectively.
In round $2$, we have $A^2 = (\{g_1\}, \{g_2\})$.
However, in round $3$, the only contiguous allocation that satisfies EF1 is $A^3 = (\{g_1, g_2\}, \{g_3\})$, in which the valuation of agent $2$ is less than the previous round.

\section{Efficient Computation of Contiguous Leximinmin Allocations for Identical Valuations}\label{apx:dynamic-programming-to-compute-leximin2-allocations}

In this section, we discuss how to efficiently compute a contiguous leximinmin allocation.
Recall that the contiguous leximinmin allocation is the contiguous leximin allocation $A$ with lexicographically minimum $P(A)$ among all leximin allocations, where $P(A) = (\ell_0, \cdots, \ell_n)$ such that $\ell_0 = 0$ and $\ell_i$ is the index of the last item in $A_i$.
Given an instance with $n$ agents and $m$ items, we will apply Dynamic Programming to compute the contiguous leximinmin allocation in $O(n^2 m^2)$ time.

Suppose that there are $n$ agents with general identical valuations $v_1, \ldots, v_n$ and $m$ items $g_1, \ldots, g_m$ arranged on a line.
Recall that $M_{l, r} = \{g_{l + 1}, \ldots, g_r\}$ is the set of items whose indexes are in range $[l + 1, r]$.
In particular, let $M_{l, r} = \emptyset$ if $l \geq r$.
For an allocation $A = (A_1, \ldots, A_k)$, define $Q(A) = (v(A_{\sigma(1)}), v(A_{\sigma(2)}), \ldots, v(A_{\sigma(k)}))$ as the tuple obtained by sorting the valuations of all agents in nondecreasing order, where $\sigma$ is a permutation with length $k$ satisfying $v(A_{\sigma(i)}) \leq v(A_{\sigma(i + 1)})$ for any $i < k$.
Given allocations $A'$ and $A''$, we say that $A'$ is better than $A''$ if either $Q(A')$ is lexicographically larger than $Q(A'')$ or $Q(A') = Q(A'')$ and $P(A')$ is lexicographically smaller than $P(A'')$.
By definition, the contiguous leximinmin allocation is the best allocation among all contiguous allocations.

For any $i \in [n]$ and $j \in [m] \cup \{0\}$, let $A[i, j]$ be the leximinmin allocation for $M_j$ with $i$ agents.
Observe that if $A = (A_1, \ldots, A_i)$ is a leximinmin allocation, for any $k \in [i]$, $A' = (A_1, \ldots, A_k)$ is the leximinmin allocation for the first $|A_1| + \cdots + |A_k|$ items with $k$ agents.
Thus if the last block of $A[i, j]$ is $M_{k, j}$ for some $0 \leq k \leq j$, then $A[i, j] = A[i - 1, k] \circ M_{k, j}$, where we use $A \circ x$ to denote the tuple obtained by appending $x$ to the end of tuple $A$.
As a result, the recurrence relation of $A[i, j]$ is
\begin{align*}
    A[i, j] = 
    \begin{cases}
        (M_j), & i = 1,\\
        \text{the best allocation in } \{A[i - 1, k] \circ M_{k, j} \mid 0 \leq k \leq j\}, & \text{otherwise}.
    \end{cases}
\end{align*}

In the implementation, we only need to maintain $P(A[i, j])$ and $Q(A[i, j])$, which are the necessary information to compare two allocations, rather than recording the exact allocation.
We use $f[i, j]$ to denote $P(A[i, j])$ and $h[i, j]$ to denote $Q(A[i, j])$.
By the recurrence relation of $A[i, j]$, we have $f[i, j] = f[i - 1, k] \circ j$, where the last block of $A[i, j]$ is $M_{k, j}$.
Analogously, $h[i, j]$ can be derived by inserting $v(M_{k, j})$ to the appropriate position in $h[i - 1, k]$ to ensure the monotonicity of the elements in $h[i, j]$.

Now we analyze the time complexity of the Dynamic Programming approach that we discussed above.
For each pair $(i, j) \in [n] \times ([m] \cup \{0\})$, we need to enumerate all allocations in $\{ A[i - 1, k] \circ M_{k, j} \mid 0 \leq k \leq j\}$ and make comparisons among them to compute $A[i, j]$.
Thus the total number of comparisons to be made is $O(n m^2)$.
Moreover, each comparison can be finished in $O(n)$ time.
Therefore, the running time of the algorithm is $O(n^2 m^2)$.

%\section{Proof of Lemma~\ref{lem:nondecreasing-leximin-allocations}}
\section{Missing Proofs of Section~\ref{sec:contiguous-setting-with-identical-valuations}}
\label{apx:missing-proofs-of-section-contiguous-setting-with-identical-valuations}

%\begin{theorem}[Theorem~\ref{thm:PROPa-algorithm-for-identical-valuations}]
%    Algorithm~\ref{online-proportional-algorithm-for-additive-identical-agents} is a PROPa algorithm for identical valuations that requires $O(nT)$ adjustments.
%\end{theorem}
\thmPROPaalgorithmforidenticalvaluations*

\begin{proof}
    Let $B_t^{\max} = \max_{i \in [t]} B_i$.
    For a contiguous allocation $A = (A_1, \ldots, A_n)$, define $\ell(A_i)$ as the index of the last item in $A_i$.
    In particular, let $\ell(A_0) = 0$.
    We abuse the notation and let $\ell(A) = (\ell(A_1), \ldots, \ell(A_{n - 1}))$.
    For two contiguous allocations $A'$ and $A''$, we say that $\ell(A') \prec \ell(A'')$ if $\ell(A')$ is lexicographically smaller than $\ell(A'')$.
    We say that $\ell(A') \preceq \ell(A'')$ if $\ell(A') \prec \ell(A'')$ or $\ell(A') = \ell(A'')$.
    For any $t \in [T]$, define $X^t$ as the allocation with lexicographically minimum $\ell(A)$ among all contiguous allocations $A = (A_1, \ldots, A_n)$ for $M_t$ satisfying constraint factor $B_t^{\max}$, i.e., $\min_{i \in N} v(A_i) \geq B^{\max}_t$.
    We first show that $X^t$ is well-defined for all $t \in [T]$, and $\ell(X_i^t) \leq \ell(X_i^{t + 1})$ for all $t < T$ and $i < n$.
    Then we prove that $A^t = X^t$ for all $t \in [T]$.
    
    We first prove that $X^t$ is well-defined for all $t \in [T]$, which is equivalent to showing the existence of the contiguous allocation for $M_t$ that satisfies constraint factor $B_t^{\max}$.
    The existence of $X^t$ trivially holds for $t = 0$.
    We assume for induction that $X^t$ exists and show that $X^{t + 1}$ also exists.
    If $B_{t + 1}^{\max} = B_t^{\max}$, it is easy to show that by adding $g_{t + 1}$ to the last block of $X^t$, the corresponding allocation for $M_{t + 1}$ satisfies constraint factor $B_t^{\max} = B_{t + 1}^{\max}$ and we are done.
    If $B_{t + 1}^{\max} > B_t^{\max}$, since in this case we must have $B_{t + 1}^{\max} = B_{t + 1}$, by Lemma~\ref{proportional-lemma}, the allocation that satisfies constraint factor $B_{t + 1}$ exists.
    Therefore, $X^t$ is well-defined for all $t \in [T]$.

    Next, we show that for all $t < T$ and $i < n$, $\ell(X^t_i) \leq \ell(X^{t + 1}_i)$.
    Fix $t < T$.
    Suppose for contradiction that there exists $i < n$ such that $\ell(X^t_i) > \ell(X^{t + 1}_i)$ and $\ell(X^t_j) \leq \ell(X^{t + 1}_j)$ for all $j < i$.
    By the definition of $\ell(X^t_i)$, we have $X^{t + 1}_i \subsetneq X^t_i$.
    Since $B_1^{\max}, \ldots, B_T^{\max}$ are nondecreasing, we have $v(X_i^{t + 1}) \geq B^{t + 1} \geq B^t$ due to the definition of $X_i^{t + 1}$.
    In allocation $X^t$, by moving the items in $X^t_i \setminus X^{t + 1}_i$ from $X_i^t$ to $X_{i + 1}^t$, we obtain a new allocation $X$ for $M_t$ with $\ell(X) \prec \ell(X^t)$ that satisfies the constraint factor $B_t^{\max}$, which contradicts the lexicographic minimality of $X^t$.
    Thus $\ell(X_i^t) \leq \ell(X_i^{t + 1})$ for all $t < T$ and $i < n$.
    
    Now it suffices to prove that for all $t \in [T]$, $A^t = X^t$, which satisfies constraint factor $B_t^{\max} \geq B_t$ by definition.
    First, $A^t = X^t$ trivially holds for $t = 0$.
    Supposing for induction that $A^{t - 1} = X^{t - 1}$ where $t < T$, we will show that $A^t = X^t$.

    If $B_{t}^{\max} = B_{t - 1}^{\max}$, then $p_1, \ldots, p_{n-1}$ will not change in round $t$ since $A^{t - 1} = X^{t - 1}$ satisfies the constraint factor $B_{t - 1}^{\max}$, which indicates that $\ell(A^{t}) = \ell(A^{t - 1})$.
    By the lexicographic minimality of $X^{t}$, we have $\ell(X^{t}) \preceq \ell(A^{t})$.
    Thus,
    \begin{align*}
        \ell(X^{t - 1}) \preceq \ell(X^{t}) \preceq \ell(A^{t}) = \ell(A^{t - 1}) = \ell(X^{t - 1}),
    \end{align*}
    by which we conclude that $A^{t} = X^{t}$.
    % As a result of induction, $A^t = X^t$ holds for all $t \in [T]$.
    
    Suppose $B_{t}^{\max} > B_{t - 1}^{\max}$, which implies that $B_{t} = B_{t}^{\max}$.
    At the beginning of round $t$, we have
    \begin{align*}
        p_i
        = \ell(A^{t - 1}_i)
        = \ell(X^{t - 1}_i)
        \leq \ell(X^{t}_i),
    \end{align*}
    for all $i < n$.
    It is trivial that $\ell(A_0^{t}) = \ell(X_0^{t})$.
    Assuming for induction that $\ell(A_{i - 1}^{t}) = \ell(X_{i - 1}^{t})$ where $i < n$, we will show that $\ell(A_i^{t}) = \ell(X_i^{t})$.
    On one hand, $\ell(A_i^{t}) \leq \ell(X_i^{t})$ must hold, since $v(X_i^{t}) \geq B_t$ and the condition of Line~\ref{algoline:move-pointer1} will be violated when $p_i = \ell(X_i^{t})$.
    On the other hand, $\ell(A_i^t) \geq \ell(X_i^t)$ must hold, since otherwise we have $A_i^t \subsetneq X_i^t$ and $v(A_i^t) \geq B_t$, and by moving the items in $X_i^t \setminus A_i^t$ from $X_i^t$ to $X_{i + 1}^t$, we obtain a new allocation $X$ for $M_t$ with $\ell(X) \prec \ell(X^t)$ that satisfies constraint factor $B_t = B_t^{\max}$, which contradicts the lexicographic minimality of $X^t$.
    Therefore, $A^t = X^t$ for all $t \in [T]$.
    
    Finally, we prove the number of adjustments required by Algorithm~\ref{online-proportional-algorithm-for-additive-identical-agents}.
    For an item $g_t$ that has arrived, the agent $i$ that $g_t$ belongs to satisfies $p_{i - 1} < t \leq p_i$.
    Since $p_i$ is nondecreasing for each $i \in N$, the number of adjustments made on any item $g_t$ is at most $n - 1$.
    As a result, the total number of adjustments is at most $(n-1)T = O(nT)$.
\end{proof}

\begin{figure}[t]
    \centering
    \subfigure{
		\begin{minipage}[t]{.5\textwidth}%%%%%%%%%note2
			% fig1

\begin{tikzpicture}[thick,scale=0.6, every node/.style={transform shape}]
    \draw (0,0) rectangle (11,1);
    \filldraw[pattern color=red!70,pattern=north west lines] (0,0) rectangle (4,1);
    \draw (0,1.5) rectangle (10.5,2.5);
    \filldraw[pattern color=blue!70,pattern=north west lines] (0,1.5) rectangle (5,2.5);
    \draw (4, 0) -- (4, 1);
    \draw (8, 0) -- (8, 1);
    \draw (5, 1.5) -- (5, 2.5);
    \draw (7, 1.5) -- (7, 2.5);
    \node[font=\fontsize{20}{6}\selectfont] at (-1,2) {$X^t$};
    \node[font=\fontsize{20}{6}\selectfont] at (-1,0.5) {$X^{t + 1}$};
    
    \node[font=\fontsize{20}{6}\selectfont] at (7,3.1) {$\ell^1_k$};
    \node[font=\fontsize{20}{6}\selectfont] at (5,3.1) {$\ell^1_{k - 1}$};
    \node[font=\fontsize{20}{6}\selectfont] at (10.5,3.1) {$\ell^1_{n}$};
    \node[font=\fontsize{20}{6}\selectfont] at (0,3.1) {$\ell^1_{0}$};
    \node[font=\fontsize{30}{6}\selectfont] at (2.5,2) {$\ldots$};
    \node[font=\fontsize{30}{6}\selectfont] at (9,2) {$\ldots$};
    
    \node[font=\fontsize{20}{6}\selectfont] at (8,-0.6) {$\ell^2_k$};
    \node[font=\fontsize{20}{6}\selectfont] at (4,-0.6) {$\ell^2_{k - 1}$};
    \node[font=\fontsize{20}{6}\selectfont] at (11,-0.6) {$\ell^2_{n}$};
    \node[font=\fontsize{20}{6}\selectfont] at (0,-0.6) {$\ell^2_{0}$};
    \node[font=\fontsize{30}{6}\selectfont] at (2,0.5) {$\ldots$};
    \node[font=\fontsize{30}{6}\selectfont] at (9.5,0.5) {$\ldots$};
    
    \draw [decorate,decoration = {brace,amplitude=2mm},ultra thick] (0, 3.8) -- (7, 3.8);
    \node[font=\fontsize{20}{6}\selectfont] at (3.5,5) {$A^{11}$};
    
    \draw [decorate,decoration = {brace,mirror,amplitude=2mm},ultra thick] (0, -1.2) -- (8, -1.2);
    \node[font=\fontsize{20}{6}\selectfont] at (4,-2.2) {$A^{22}$};
\end{tikzpicture}

\begin{comment}

\begin{tikzpicture}[thick,scale=0.3, every node/.style={transform shape}]
    \node[font=\fontsize{20}{6}\selectfont] at (4,-2.2) {$A^{22}$};
\end{tikzpicture}

\begin{tikzpicture}[thick,scale=0.3, every node/.style={transform shape}]
    \node[font=\fontsize{20}{6}\selectfont] at (4,-2.2) {$A^{22}$};
\end{tikzpicture}

\end{comment}
		\end{minipage}%
	}%
    \subfigure{
		\begin{minipage}[t]{.5\textwidth}%%%%%%%%%note2
			% fix2

\begin{tikzpicture}[thick,scale=0.6, every node/.style={transform shape}]
    \draw (0,0) rectangle (11,1);
    \filldraw[pattern color=red!70,pattern=north west lines] (0,1.5) rectangle (4,2.5);
    \draw (0,1.5) rectangle (10.5,2.5);
    \filldraw[pattern color=blue!70,pattern=north west lines] (0,0) rectangle (5,1);
    \draw (4, 1.5) -- (4, 2.5);
    \draw (8, 0) -- (8, 1);
    \draw (5, 0) -- (5, 1);
    \draw (7, 1.5) -- (7, 2.5);
    % \node[font=\fontsize{20}{6}\selectfont] at (-1,2) {$\widehat{A^t}$};
    % \node[font=\fontsize{20}{6}\selectfont] at (-1,0.5) {$\widehat{A^{t + 1}}$};
    
    \node[font=\fontsize{20}{6}\selectfont] at (7,3.1) {$\ell^1_k$};
    \node[font=\fontsize{20}{6}\selectfont] at (5,-0.6) {$\ell^1_{k - 1}$};
    \node[font=\fontsize{20}{6}\selectfont] at (10.5,3.1) {$\ell^1_{n}$};
    \node[font=\fontsize{20}{6}\selectfont] at (0,-0.6) {$\ell^1_{0}$};
    \node[font=\fontsize{30}{6}\selectfont] at (2,2) {$\ldots$};
    \node[font=\fontsize{30}{6}\selectfont] at (9,2) {$\ldots$};
    
    \node[font=\fontsize{20}{6}\selectfont] at (8,-0.6) {$\ell^2_k$};
    \node[font=\fontsize{20}{6}\selectfont] at (4,3.1) {$\ell^2_{k - 1}$};
    \node[font=\fontsize{20}{6}\selectfont] at (11,-0.6) {$\ell^2_{n}$};
    \node[font=\fontsize{20}{6}\selectfont] at (0,3.1) {$\ell^2_{0}$};
    \node[font=\fontsize{30}{6}\selectfont] at (2.5,0.5) {$\ldots$};
    \node[font=\fontsize{30}{6}\selectfont] at (9.5,0.5) {$\ldots$};
    
    \draw [decorate,decoration = {brace,amplitude=2mm},ultra thick] (0, 3.8) -- (7, 3.8);
    \node[font=\fontsize{20}{6}\selectfont] at (3.5,5) {$A^{21}$};
    
    \draw [decorate,decoration = {brace,mirror,amplitude=2mm},ultra thick] (0, -1.2) -- (8, -1.2);
    \node[font=\fontsize{20}{6}\selectfont] at (4,-2.2) {$A^{12}$};
\end{tikzpicture}
		\end{minipage}%
	}%
    \caption{Figure illustrating the proof of Lemma~\ref{lem:nondecreasing-leximin-allocations}}
    \label{fig:demon-for-leximin}
\end{figure}

%\begin{lemma}[Lemma~\ref{lem:leximin-allocations-with-bounded-valuations}]
%    For additive identical valuations, suppose that the valuation of each item is upper bounded by $R$.
%    Let $A$ be a leximin contiguous allocation of $M$ and $i$ satisfy $v(A_i) \leq v(A_j)$ for all $j \in N$.
%    Let $B = v(A_i)$.
%    Then throughout Algorithm~\ref{alg:EF1-offline-algorithm-for-arbitrary-many-agents}, $v(A_j) \leq |i - j| R + B$ always holds for any $j \in N$.
%\end{lemma}
\lemleximinallocationswithboundedvaluations*

\begin{proof}
    Let's first specify some conventions.
    Since for any feasible tuple $O = (\ell_0, \ldots, \ell_n)$, there exists the only allocation $A$ satisfying $P(A) = O$, from now on we will use a tuple to refer to the corresponding allocation.
    % It is easy to verify that $P^{-1}$ is well-defined.
    For an allocation $A = (A_1, \ldots, A_k)$ and $O = P(A)$, define $Q(O) = (v(A_{\sigma(1)}), v(A_{\sigma(2)}), \ldots, v(A_{\sigma(k)}))$ as the tuple obtained by sorting the valuations of all agents in nondecreasing order, where $\sigma$ is a permutation with length $k$ satisfying $v(A_{\sigma(i)}) \leq v(A_{\sigma(i + 1)})$ for any $i < k$.
    For two tuples $O_1$ and $O_2$, we say $O_1 \preceq O_2$ if $O_1$ is lexicographically not larger than $O_2$.
    Without loss of generality, we assume that $v(g) > 0$ for all $g \in M$.
        
    Fix round $t$.
    We will prove that $P_j(X^t) \leq P_j(X^{t + 1})$ for every $j \in N$.
    Let $P_j(X^t) = (\ell_0^1, \ell_1^1, \ldots, \ell_n^1)$ and $P_j(X^{t + 1}) = (\ell_0^2, \ell_1^2, \ldots, \ell_n^2)$, as illustrated in Figure~\ref{fig:demon-for-leximin}.
    Since both $X^t$ and $X^{t + 1}$ are leximinmin, for any $j \in [n]$ and $k \in \{1, 2\}$, $(\ell_0^k, \ldots, \ell_j^k)$ must be a leximinmin allocation for $M_{\ell^k_j}$ with $j$ agents.
    Thus if $\ell_j^1 = \ell_j^2$ for some $j \in N$, then $\ell_k^1 = \ell_k^2$ for all $k \leq j$.
    
    Suppose for contradiction that there exists $j \in N$ such that $\ell_j^1 > \ell_j^2$.
    Let $k$ be the first index larger than $j$ such that $\ell^1_k < \ell^2_k$.
    Since $\ell^1_n = t < t + 1 = \ell^2_n$, such $k$ must exist and, in the same time, satisfies that $\ell_{k-1}^1 > \ell_{k-1}^2$.
    For $p, q \in \{1, 2\}$, define $A^{pq}$ as the allocation for $M_{\ell_k^q}$ such that $P(A^{pq}) = (\ell_0^p, \ell_1^p, \ldots, \ell_{k - 1}^p, \ell_k^q)$.
    Since both $A^{11}$ and $A^{22}$ are leximinmin, we have
    \begin{align*}
        Q(A^{21}) \preceq Q(A^{11}),
    \end{align*}
    and
    \begin{align}
        Q(A^{12}) \preceq Q(A^{22}).\label{eqn:A12-leq-A22}
    \end{align}
    It suffices to show that $Q(A^{11}) \preceq Q(A^{21})$ and $Q(A^{22}) \preceq Q(A^{12})$.
    Since in this case, if 
    \begin{align*}
        (\ell_0^2, \ell_1^2, \ldots, \ell_{k - 1}^2) \preceq (\ell_0^1, \ell_1^1, \ldots, \ell_{k - 1}^1),
    \end{align*}
    we can replace the first $k$ elements of $P(X^t) = (\ell_0^1, \ldots, \ell_n^1)$ with $\ell_0^2, \ldots, \ell_{k - 1}^2$ to obtain another leximin allocation $A$ with $P(A) \preceq P(X^t)$, which contradicts the fact that $X^t$ is leximinmin.
    By contrast, if 
    \begin{align*}
        (\ell_0^1, \ell_1^1, \ldots, \ell_{k - 1}^1) \preceq (\ell_0^2, \ell_1^2, \ldots, \ell_{k - 1}^2),
    \end{align*}
    we can replace the first $k$ elements of $P(X^{t + 1}) = (\ell_0^2, \ldots, \ell_n^2)$ with $\ell_0^1, \ldots, \ell_{k - 1}^1$ to obtain another leximin allocation $A$ with $P(A) \preceq P(X^{t + 1})$, which contradicts the fact that $X^{t + 1}$ is leximinmin.
    
    Due to the symmetry, we only prove $Q(A^{11}) \preceq Q(A^{21})$.
    Since $\ell_k^1 < \ell_k^2$ and $\ell_{k - 1}^1 > \ell_{k - 1}^2$, we have
    \begin{align}
        v(A_k^{11}) < v(A_k^{12}) < v(A_k^{22}),\label{eqn:Ak11<Ak12<Ak22}
    \end{align}
    and thus
    \begin{align}
        v(A_k^{11}) = v(A_k^{12}) - v(M_{\ell_k^2} \setminus M_{\ell_k^1}) < v(A_k^{22}) - v(M_{\ell_k^2} \setminus M_{\ell_k^1}) = v(A_k^{21}).\label{eqn:Ak11<Ak21}
    \end{align}
    Note that all but the last blocks in $A^{12}$ are the same as those in $A^{11}$, and all but the last blocks in $A^{22}$ are the same as those in $A^{21}$.
    Recall that when deciding the lexicographic order of two tuples, we first compare the first elements of two tuples; if they are equal, then compare the second ones, and so on.
    If \eqref{eqn:A12-leq-A22} is decided only by the blocks in $A^{12}$ and $A^{22}$ with valuations smaller than $v(A_k^{11})$, then $Q(A^{11}) \preceq Q(A^{21})$ holds because the blocks with valuations smaller than $v(A_k^{11})$ in $A^{11}$ and $A^{21}$ are the same with those in $A^{12}$ and $A^{22}$.
    On the other hand, if \eqref{eqn:A12-leq-A22} is decided by the blocks with valuations at least $v(A_k^{11})$, \eqref{eqn:A12-leq-A22} implies that the number of blocks with valuations $v(A_k^{11})$ in $A^{12}$ is not smaller than the number of blocks with valuations $v(A_k^{11})$ in $A^{22}$.
    By \eqref{eqn:Ak11<Ak12<Ak22} and \eqref{eqn:Ak11<Ak21}, the number of blocks with valuations $v(A_k^{11})$ in $A^{11}$ is strictly larger than the number of blocks with valuations $v(A_k^{11})$ in $A^{21}$.
    Therefore, $Q(A^{11}) \preceq Q(A^{21})$ is established.
    %All the above concludes the proof of Lemma~\ref{lem:nondecreasing-leximin-allocations}.
\end{proof}

%\section{Proof of Theorem~\ref{non-identical-lower-bound-for-EF1}}
%\label{apx:proof-for-non-identical-lower-bound-for-EF1}
\section{Missing Proofs of Section~\ref{sec:contiguous-setting-with-nonidentical-valuations}}
\label{apx:missing-proofs-of-contiguous-nonidentical-setting}

%\begin{theorem}[Theorem~\ref{thm:non-identical-lower-bound-for-EF1}]
%    For nonidentical valuations, every EF1 algorithm requires at least $\Omega(T^2)$ adjustments.
%\end{theorem}
\thmnonidenticallowerboundforEF*

\begin{proof}
    Suppose that there are $T$ items and $n$ agents with valuation $v_1, \ldots, v_n$.
    For any $t \in [T]$, let
    \begin{align*}
        v_i(g_t) = 
        \begin{cases}
            n^{2c}, & 3cn + 3(i - 1) < t \leq 3cn + 3i \text{ for some } c \geq 0,\\
            0, & \text{otherwise},
        \end{cases}
    \end{align*}
    for all $i \in N$.
    That is, each period of length $3 n$ is divided into $n$ blocks of length $3$, and agent $i$ is only interested in the items in the $i$-th block in each period $c$ with $n^{2c}$ valuation for each of these items.
    It suffices to show that, for every $k \geq 2n$, the first $3k - 3n$ items belong to one agent in round $3k$ and another agent in round $3k + 3$.
    Since it follows that the number of adjustments required is at least $\sum_{k=2n}^{T/3 - 1} 3(k - n) = \Theta(T^2)$, where we assume $T \gg n$.
     
    We only give the proof for $k$ such that $k$ is a multiple of $n$, and the proof can be easily generalized to any $k \geq 2n$.
    Now we prove that, for any $c \geq 2$ and $k = cn$, the first $3k - 3n = (c - 1) \cdot 3n$ items belong to agent $1$ in round $3k$ and belong to agent $2$ in round $3k + 3$.
    Define $G_i = \{g_t \mid (c - 1) \cdot 3n + 3(i - 1) < t \leq (c - 1) \cdot 3n + 3i\}$ as the set of items that arrive during period $c$ and agent $i$ is interested in.
    In round $3k$, if agent $i$ gets none of the items in $G_i$, there must exist another agent $j$ that obtains at least two of them due to the contiguity requirement.
    Furthermore, the total valuation of the first $3k - 3n = (c - 1) \cdot 3n$ items for agent $i$ is
    \begin{align*}
        3 \sum_{j=0}^{c - 2} n^{2j} = 3 \cdot \frac{n^{2(c - 1)} - 1}{n^2 - 1} < n^{2(c - 1)} = v_i(g), \quad \forall g \in G_i.
    \end{align*}
    Thus if agent $i$ gets none of the items in $G_i$, he must envy agent $j$ up to one item.
    As a result, to satisfy EF1, each agent $i$ should get at least one of the items in $G_i$.
    Due to the contiguity requirement, for every $i \in N$, agent $i$ must get the $i$-th block which contains at least one item in $G_i$.
    In this case, the first $3k - 3n = (c - 1) \cdot 3n$ items belong to agent $1$.
    Similarly, we can show that in round $3k + 3$, the first $3k - 3n + 3$ items belong to agent $2$ and we are done.
    %Therefore, we conclude that any EF1 algorithm requires at least $\Omega(T^2)$ adjustments.
\end{proof}

%\begin{theorem}[Theorem~\ref{thm:lower-bound-for-2-binary-agents}]
%    For $2$ agents with binary valuations, every EF1 algorithm requires at least $\Omega(T^2)$ adjustments.
%\end{theorem}
\thmlowerboundforbinaryagents*

\begin{proof}
    \begin{figure}[t]
        \centering
    	% fig1

\begin{tikzpicture}
%[thick,scale=0.52, every node/.style={transform shape}]
    \node[font=\fontsize{15}{6}\selectfont] at (-1.5,0) {$v_2(g_t)$};
    \node[font=\fontsize{15}{6}\selectfont] at (-1.5,1) {$v_1(g_t)$};
    \node[font=\fontsize{15}{6}\selectfont] at (0,0) {$1$};
    \node[font=\fontsize{15}{6}\selectfont] at (0,1) {$1$};
    \node[font=\fontsize{20}{6}\selectfont] at (1,0) {$\ldots$};
    \node[font=\fontsize{20}{6}\selectfont] at (1,1) {$\ldots$};
    \node[font=\fontsize{15}{6}\selectfont] at (2,0) {$1$};
    \node[font=\fontsize{15}{6}\selectfont] at (2,1) {$1$};
    
    \draw [decorate,decoration = {brace,amplitude=2mm},ultra thick] (-0.1, 1.5) -- (2.1, 1.5);
    \node[font=\fontsize{15}{6}\selectfont] at (1,2.3) {$2T + 2$};
    
    \node[font=\fontsize{15}{6}\selectfont] at (3,0) {$0$};
    \node[font=\fontsize{15}{6}\selectfont] at (3,1) {$1$};
    \node[font=\fontsize{15}{6}\selectfont] at (3.5,0) {$0$};
    \node[font=\fontsize{15}{6}\selectfont] at (3.5,1) {$1$};
    
    \node[font=\fontsize{15}{6}\selectfont] at (4.5,0) {$1$};
    \node[font=\fontsize{15}{6}\selectfont] at (4.5,1) {$0$};
    \node[font=\fontsize{15}{6}\selectfont] at (5,0) {$1$};
    \node[font=\fontsize{15}{6}\selectfont] at (5,1) {$0$};
    \node[font=\fontsize{15}{6}\selectfont] at (5.5,0) {$1$};
    \node[font=\fontsize{15}{6}\selectfont] at (5.5,1) {$0$};
    \node[font=\fontsize{15}{6}\selectfont] at (6,0) {$1$};
    \node[font=\fontsize{15}{6}\selectfont] at (6,1) {$0$};
    \node[font=\fontsize{15}{6}\selectfont] at (6.5,0) {$0$};
    \node[font=\fontsize{15}{6}\selectfont] at (6.5,1) {$1$};
    \node[font=\fontsize{15}{6}\selectfont] at (7,0) {$0$};
    \node[font=\fontsize{15}{6}\selectfont] at (7,1) {$1$};
    \node[font=\fontsize{15}{6}\selectfont] at (7.5,0) {$0$};
    \node[font=\fontsize{15}{6}\selectfont] at (7.5,1) {$1$};
    \node[font=\fontsize{15}{6}\selectfont] at (8,0) {$0$};
    \node[font=\fontsize{15}{6}\selectfont] at (8,1) {$1$};
    
    \node[font=\fontsize{15}{6}\selectfont] at (9,0) {$1$};
    \node[font=\fontsize{15}{6}\selectfont] at (9,1) {$0$};
    \node[font=\fontsize{15}{6}\selectfont] at (9.5,0) {$1$};
    \node[font=\fontsize{15}{6}\selectfont] at (9.5,1) {$0$};
    \node[font=\fontsize{15}{6}\selectfont] at (10,0) {$1$};
    \node[font=\fontsize{15}{6}\selectfont] at (10,1) {$0$};
    \node[font=\fontsize{15}{6}\selectfont] at (10.5,0) {$1$};
    \node[font=\fontsize{15}{6}\selectfont] at (10.5,1) {$0$};
    \node[font=\fontsize{15}{6}\selectfont] at (11,0) {$0$};
    \node[font=\fontsize{15}{6}\selectfont] at (11,1) {$1$};
    \node[font=\fontsize{15}{6}\selectfont] at (11.5,0) {$0$};
    \node[font=\fontsize{15}{6}\selectfont] at (11.5,1) {$1$};
    \node[font=\fontsize{15}{6}\selectfont] at (12,0) {$0$};
    \node[font=\fontsize{15}{6}\selectfont] at (12,1) {$1$};
    \node[font=\fontsize{15}{6}\selectfont] at (12.5,0) {$0$};
    \node[font=\fontsize{15}{6}\selectfont] at (12.5,1) {$1$};
    
    \node[font=\fontsize{20}{6}\selectfont] at (13.5,0) {$\ldots$};
    \node[font=\fontsize{20}{6}\selectfont] at (13.5,1) {$\ldots$};
    
    \draw [decorate,decoration = {brace,amplitude=2mm},ultra thick] (4.4, 1.5) -- (8.1, 1.5);
    \node[font=\fontsize{15}{6}\selectfont] at (6.5,2.3) {Period $1$};
    
    \draw [decorate,decoration = {brace,amplitude=2mm},ultra thick] (8.9, 1.5) -- (12.6, 1.5);
    \node[font=\fontsize{15}{6}\selectfont] at (11,2.3) {Period $2$};
\end{tikzpicture}
        \caption{Figure illustrating the instance in the proof of Theorem~\ref{thm:lower-bound-for-2-binary-agents}}
        \label{fig:lower-bound-of-binary-valuations}
    \end{figure}

    We describe an instance with $4T+4$ items where $T > 0$ is a multiple of $4$.
    The instance is illustrated in Figure~\ref{fig:lower-bound-of-binary-valuations}.
    We say that an item $g$ is type $0$ if $v_1(g) = v_2(g) = 1$, type $1$ if $v_1(g) = 1, v_2(g) = 0$, and type $2$ if $v_1(g) = 0, v_2(g) = 1$.
    The first $2T + 2$ items are type $0$.
    The following $2$ items are type $1$.
    The remaining $2T$ items are divided into periods with a length of $8$.
    In each period, the first $4$ items are type $2$ and the last $4$ items are type $1$.
    For any $0 \leq k < T / 4$, we show that the first block must belong to agent $2$ in round $2T + 8k + 4$ and must belong to agent $1$ in round $2T + 8k + 8$.
    
    Given an allocation, we say that an item is non-wasteful if it has a valuation of $1$ for the agent that obtains it.
    In round $2T + 8k + 4$, the numbers of non-wasteful items required by agent $1$ and agent $2$ are $T + 2k + 2$ and $T + 2k + 1$, respectively.
    Thus an EF1 allocation should contain at least $2T + 4k + 3$ non-wasteful items.
    Suppose for contradiction that the second block belongs to agent $2$.
    Denote $G$ as the set of the last $8k + 2$ items.
    Since $v_2(G) = 4k < T + 2k + 1$, all items in $G$ must belong to agent $2$, and thus there are $|G| - v_2(G) = 4k + 2$ wasteful items in $G$.
    As a result, the number of non-wasteful items is at most
    \begin{align*}
        (2T + 8k + 4) - (4k + 2) = 4T + 4k + 2 < 2T + 4k + 3,
    \end{align*}
    which leads to a contradiction.
    Therefore, the first block must belong to agent $2$ in round $2T + 8k + 4$.
    
    The proof for round $2T + 8k + 8$ is analogous.
    In round $2T + 8k + 8$, the numbers of non-wasteful items required by agent $1$ and agent $2$ are $T + 2k + 2$ and $T + 2k + 3$, respectively.
    Thus an EF1 allocation should contain at least $2T + 4k + 5$ non-wasteful items.
    Suppose for contradiction that the second block belongs to agent $1$.
    Denote $G$ as the set of the last $8k + 6$ items.
    Since $v_1(G) = 4k + 2 < T + 2k + 3$, all items in $G$ must belong to agent $1$, and thus there are $|G| - v_1(G) = 4k + 4$ wasteful items in $G$.
    As a result, the number of non-wasteful items is at most
    \begin{align*}
        (2T + 8k + 8) - (4k + 4) = 2T + 4k + 4 < 2T + 4k + 5,
    \end{align*}
    which leads to a contradiction.
    Therefore, the first block must belong to agent $1$ in round $2T + 8k + 8$.
    
    Finally, we prove the lower bound of the number of adjustments.
    Start from round $2T + 4$ and the agent who obtains the first block alternates after every $4$ rounds.
    Since the first block must contain the first $T$ items, the number of adjustments is at least $T^2 / 2 = \Theta(T^2)$.
\end{proof}

Unfortunately, the instance given above only provides a $\Omega(nT)$ lower bound after being generalized to any number of agents, which is directly implied by Theorem~\ref{thm:lower-bound-for-identical-valuations-and-proportionality}.

\end{document}